\providecommand{\tabularnewline}{\\}
  \theoremstyle{plain}
  \newtheorem{thm}{\protect\theoremname}
   \newenvironment{proof}[1][\proofname]{\par
     \normalfont\topsep6\p@\@plus6\p@\relax
     \trivlist
     \itemindent\parindent
     \item[\hskip\labelsep
           \scshape
       #1]\ignorespaces
   }{%
     \endtrivlist\@endpefalse
   }
   \providecommand{\proofname}{Proof}
  \theoremstyle{plain}
  \newtheorem{cor}{\protect\corollaryname}
  \theoremstyle{plain}
  \newtheorem{prop}{\protect\propositionname}
  \theoremstyle{plain}
  \newtheorem{lem}{\protect\lemmaname}
\renewcommand{\fnum@figure}{FIG.~\thefigure}
\definecolor{green}{RGB}{0,225,0}
\providecommand{\corollaryname}{Corollary}
\providecommand{\lemmaname}{Lemma}
\providecommand{\propositionname}{Proposition}
\providecommand{\theoremname}{Theorem}
\begin{document}
\global\long\def\ket#1{\left|#1\right\rangle }

\global\long\def\bra#1{\left\langle #1\right|}

\global\long\def\braket#1#2{\left.\left\langle #1\right.\,\right|\left.#2\right\rangle }

\global\long\def\ketbra#1#2{\left|#1\right\rangle \left\langle #2\right|}

\title{Quantum Coarse-Graining, Symmetries and Reducibility of Dynamics}

\author{Oleg Kabernik}
\email{olegk@phas.ubc.ca}

\affiliation{Department of Physics and Astronomy, University of British Columbia,
Vancouver, BC, Canada}

\date{\today}
\begin{abstract}
The common idea behind complexity reduction in physical systems is
separation of information into ``physically meaningful'' and ``safely
ignorable''. Here we consider a generic notion of such separation
\textendash{} implemented by coarse-graining the state-space \textendash{}
and address the question of what information is indeed safely ignorable
if we want to reduce the complexity of dynamics. The general condition
for reducibility of dynamics under coarse-graining will be presented
for stochastic and quantum systems. In the process we develop the
quantum notion of state-space coarse-graining that allows to marginalize
selected degrees of freedom. One of our main findings is that there
is a broader class of symmetries, beyond those that are considered
by Noether's Theorem, that can play a role in the reduction of dynamics.
Some examples of quantum coarse-grainings and the reduction of dynamics
with symmetries will be discussed. 
\end{abstract}
\maketitle

\section{Introduction}

The complexity presented by real physical systems is a fundamental
challenge that often resists ``brute force'' calculations but is
occasionally manageable with some analytical insight. The idea of
coarse-graining (CG) is a prime example of such insight, and its use
in physics traces back to the Ehrenfest's work on statistical mechanics
\cite{Ehrenfest}. Today, there are many forms in which CG appears
in physics: renormalization methods in condensed matter \cite{Kadanoff66,Wilson75},
coarse-grained modeling of biomolecular dynamics \cite{Ing=0000F3lfsson14},
and separation of scales in cosmology \cite{Ellis05} are some of
the common examples. Nonetheless, there is a common, system independent,
notion of CG that underlines all such approaches, and that is the
abstract notion of \textit{state-space coarse-graining} (from here
on by ``CG'' we will refer to this abstract notion). Studying the
implications of such generic notion of CG is therefore essential for
our understanding of complexity reduction in physical systems on a
fundamental level. 

The notion of CG is an elementary proposition in statistical mechanics
which asserts that if one is unable to distinguish some states of
the system, then the system is described by a smaller (coarser) state-space
of distinguishable states. In the context of thermodynamics, CG is
manifested by our inability to measure micro states of the system,
leading to the definition of macro states described by variables such
as temperature and pressure. Another common manifestation of CG is
the situation where a composite system has an inaccessible subsystem.
Our inability to distinguish between states that differ only by the
inaccessible part leads to a coarser description which we account
for by marginalizing the inaccessible subsystem. 

Despite its origin as a manifestation of practical limitations, the
notion of CG is generic, specified only by the choice of indistinguishable
states. Therefore, we can consider CG as a generic way to introduce
ignorance without relying on any physical structure of the system. 

The simple classical notion of CG does not translate naturally into
quantum theory and recently there have been multiple proposals for
its extension. In \cite{Matteo17}, quantum CG was implemented by
coarse-graining the quasi-probability (Wigner function) representation
of an $N$ qubit system. The authors of \cite{Duarte17} argue that
any dimension reducing quantum channel can be interpreted as a quantum
CG. In \cite{Singh17}, CG of the Hilbert space was specified by a
set of preferred states and implemented with the statistical method
of Principal Component Analysis. Finally, in \cite{Faist16}, the
quantum notion of CG was presented as the effective state-space perceived
by a constrained observer. 

The goal of current work is twofold: (a) establish the quantum notion
of CG by direct analogy with the classical concept and provide it
with operational meaning; (b) develop the framework for complexity
reduction of dynamics with CG and integrate it with the framework
of symmetries. We will initially work out the main concepts in the
more intuitive setting of classical stochastic systems, and then proceed
to the finite dimensional quantum setting. The stochastic case will
be accompanied by analysis of out-of-equilibrium dynamics of a 1D
Ising chain. In the quantum setting we will discuss some special cases
of the CG map and analyze the dynamics of continuous time quantum
walk on a binary tree. 

In order to formulate the quantum notion of CG as closely as possible
to the classical case, we will first establish it in the context of
stochastic systems. The key observation here is that CG can be interpreted
as a marginalization of a kind of subsystem (we will call it \textit{partial}
subsystem and it generalizes the idea of \textit{virtual} subsystem
\cite{Zanardi01}). This will allow us to formulate the quantum notion
of CG by a direct analogy. The result is a dimension reducing map
that implements a quantum CG scheme according to specifications that
resemble the classical choice of indistinguishability. Furthermore,
the specification of quantum CG will be directly related to a restricted
set of observables that give it operational meaning. 

The main application we will focus on is reduction of dynamics. The
key problem is identifying such CGs that allow time evolutions in
the reduced state-space to be governed by a reduced generator of dynamics.
This can be summarized with the diagram in fig. \ref{fig:dynamics reduction diagram}.
In general, time evolutions in the reduced state-space are not even
uniquely determined by initial conditions, and when they are they
may still loose the semigroup structure necessary for the existence
of the generator of dynamics \cite{Rivas12}. Therefore, it is important
to understand the compatibility condition between CG and dynamics
that allows the preservation of semigroup structure in the reduced
state-space. We will provide the general version of such condition
in Theorem \ref{thm:PQ=00003DPQP}, which applies to both stochastic
and quantum systems, and specialize it to unitary dynamics in Theorem
\ref{thm:=00005BH,S=00005D in span=00007BS_kl=00007D}. 

Symmetries turn out to play an important role in the analysis of reducibility
of dynamics. We will see that symmetrization of the state-space with
respect to some group representation is a special case of CG. Inserting
this case into the general compatibility conditions between CG and
dynamics leads to a broader class of symmetries relevant in the analysis
of dynamical evolutions. The new symmetries are defined in theorems
\ref{thm:SUM(=00005BD(g),Q=00005D)=00003D0 <=00003D> PQ=00003DPQP}
(stochastic) and \ref{thm:comp. of symmetrization with dynamocs}
(quantum) by a compatibility condition with the generator of dynamics.
In both stochastic and quantum cases, the compatibility condition
extends the relevant symmetries beyond those that commute with time
evolutions, as considered by Noether's Theorem.

\begin{figure}[t]
\centering{}\includegraphics[width=1\columnwidth]{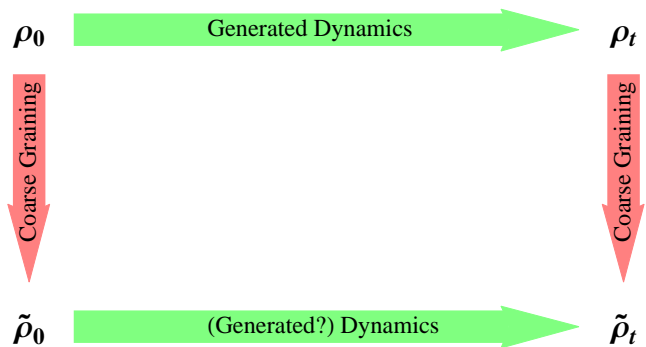}
\caption{\label{fig:dynamics reduction diagram}Dynamics of coarse-grained
states cannot be generated unless the coarse-graining scheme is \textit{compatible
}with the original generator of dynamics. }
\end{figure}

\section{Classical Coarse-Graining}

Before we formally define CG for discrete stochastic systems, it is
worth having a concrete, albeit generic, example. 

Consider a random walk on the graph of fig. \ref{fig:random walk-1}(a).
The weights on the edges represent the rate (probability per unit
time) of transitions between connected vertices in both directions.
Parameters $a$, $b$, $c$, $d$, $e$ are all positive and the rest
obey $\left|\delta\right|,\left|\epsilon\right|\leq\frac{c}{2}$,
s.t. all rates are non-negative. If we coarse-grain this system by
choosing not to distinguish between vertices that appear in the same
column, then we partition it into 3 blocks associated with the macro
states
\[
u_{1}=\left\{ v_{1}\right\} \,\,\,\,u_{2}=\left\{ v_{2},v_{3}\right\} \,\,\,\,u_{3}=\left\{ v_{4},v_{5},v_{6}\right\} .
\]
The question now is what values can we assign, if any, to the transition
rates between the macro states.

If we consider a single vertex, say $v_{2}$, and sum all the transition
rates from $v_{2}$ to the column on the right, we get 
\[
\sum_{v\in u_{3}}r\left(v_{2}\rightarrow v\right)=3c.
\]
We will get the same value if instead of $v_{2}$ we take $v_{3}$.
Therefore, the rate of transitions from \textit{any} vertex in the
middle column, to the right column is $3c$. This unambiguously defines
the rate of transition from the middle column to the right column,
without reference to any particular vertex. Similarly, the rate of
transitions from any vertex in the right column to the middle column
is $2c$. We can repeat this argument for transitions between the
left and the middle columns, yielding the rates of $2a$ and $a$
in the opposing directions. We should also note that there are no
direct transitions between the left-most and the right-most columns.
Therefore, transition rates between all three columns are well defined
and shown in fig. \ref{fig:random walk-1}(b).

The fact that we can get a well defined random walk in the reduced
state-space is not trivial. Such reduction of dynamics is only possible
when the rate of transitions between the chosen macro states is unambiguous.
Choosing a slightly different CG, where the macro state are
\[
u_{1}=\left\{ v_{1},v_{2},v_{3}\right\} \,\,\,\,\,\,\,\,\,\,\,\,u_{2}=\left\{ v_{4},v_{5},v_{6}\right\} ,
\]
results in undefined transition rates. That is because the rate of
transitions from $v_{1}$ to any vertex in $u_{2}$ is $0$, but from
$v_{2}$ or $v_{3}$ it is $3c$. Given that the initial macro state
is $u_{1}$, it is impossible to tell what the initial rate of transitions
to $u_{2}$ will be, because it depends on where inside $u_{1}$ it
actually starts. Similarly, the original choice of CG by the columns
would not work if we slightly change the dynamics by altering the
transition rate between $v_{1}$ and $v_{2}$ to $\tilde{a}\neq a$.
Now, it is not possible to tell the rate of transitions from the middle
column to the left because it depends on the internal state of the
column. 

This example demonstrates the fact that it may be possible to generate
time evolutions in the coarse-grained state-space but the original
dynamics and the CG have to be compatible. Such compatibility does
not imply that the rates of transition must be uniform \textendash{}
in general all 6 rates between vertices in the middle and the right
columns in fig. \ref{fig:random walk-1}(a) can be different. We will
prove that the necessary and sufficient condition for such compatibility
in Markovian stochastic systems is what we already noted: the \textit{total}
rate of transitions from \textit{any} state in one block to another
should be constant. 
\begin{figure}[t]
\begin{raggedright}
(a) %
\noindent\begin{minipage}[t]{1\columnwidth}%
\includegraphics[width=1\columnwidth]{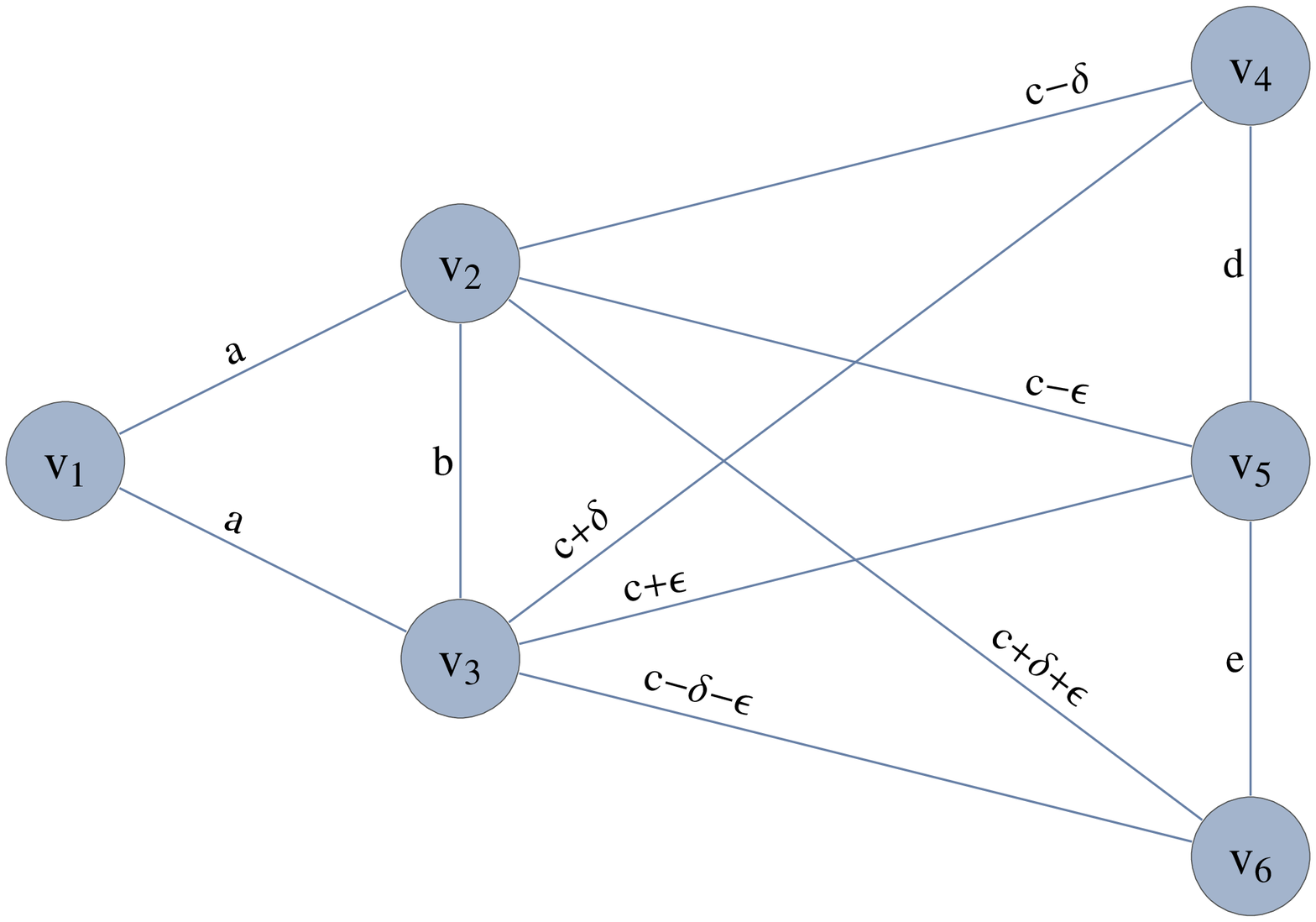}%
\end{minipage}
\par\end{raggedright}
\raggedright{}(b) %
\noindent\begin{minipage}[t]{1\columnwidth}%
\includegraphics[width=1\columnwidth]{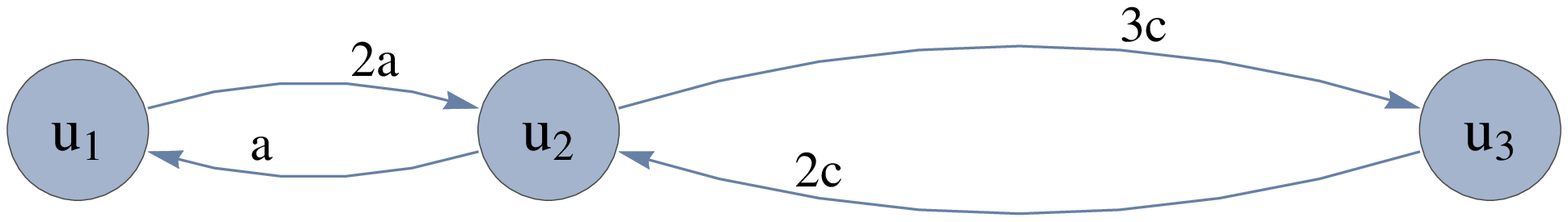}%
\end{minipage}\caption{\label{fig:random walk-1}(a) Random walk graph with transition rates
specified on the edges. (b) Coarse-grained graph where the vertices
in the same column are blocked together. Effective transition rates
between columns are specified on directed edges. }
\end{figure}

\subsection{Formal Definition\label{subsec:Basic-Definitions}}

Classical indistinguishability of states can be specified by an equivalence
relation that partitions the state-space into equivalence classes
of \textit{macro} states. If we consider the system with a discrete
and finite state-space $\mathcal{\mathsf{A}}:=\left\{ \alpha_{i}\right\} _{i=1}^{\mathsf{\left|A\right|}}$,
we can specify its CG by the set $\mathcal{\mathsf{B}}:=\left\{ \beta_{k}\right\} _{k=1}^{\mathsf{\left|B\right|}}$
of disjoint subsets of $\mathcal{\mathsf{A}}$ that partition $\mathcal{\mathsf{A}}$.
If we order the set $\mathcal{\mathsf{A}}$ consistently with the
partition we can identify blocks of indices $b_{k}:=\left\{ i_{k},i_{k}+1,...\right\} $
such that $\beta_{k}:=\left\{ \alpha_{i}\right\} _{i\in b_{k}}$.
The system is said to be in a macro state $\beta_{k}$ if it is in
\textit{any} of the micro states $\alpha_{i}\in\beta_{k}$. 

Probabilistic micro (macro) states of the system live in the vector
space $\mathbb{R}^{\mathcal{\mathsf{A}}}$ ($\mathbb{R}^{\mathcal{\mathsf{B}}}$)
of real valued functions $p$ from $\mathcal{\mathsf{A}}$ ($\mathcal{\mathsf{B}}$)
to $\mathbb{R}$ which, if positive and normalized, are interpreted
as probability distributions over the states $\mathcal{\mathsf{A}}$
($\mathcal{\mathsf{B}}$). By definition of the macro states, the
probability of finding the system in a macro state $\beta$, is the
probability of finding it in \textit{any} micro state $\alpha\in\beta$,
that is
\begin{equation}
p_{\mathcal{\mathsf{B}}}\left(\beta\right)=\sum_{\alpha\in\beta}p_{\mathcal{\mathsf{A}}}\left(\alpha\right).\label{eq: p_B(b)=00003DSum_a_(p_S(a))}
\end{equation}
Since $\mathbb{R}^{\mathcal{\mathsf{A}}}$ ($\mathbb{R}^{\mathcal{\mathsf{B}}}$)
is just an $\mathsf{\left|A\right|}$ ($\mathsf{\left|B\right|}$)
dimensional vector space, we can express relation (\ref{eq: p_B(b)=00003DSum_a_(p_S(a))})
as a vector equation $p_{\mathcal{\mathsf{B}}}=Mp_{\mathcal{\mathsf{A}}}$
and $M$ is a $\mathsf{\left|B\right|}\times\mathsf{\left|A\right|}$
block diagonal matrix of the form
\begin{equation}
M:=\begin{pmatrix}1 & \cdots & 1\\
 &  &  & \ddots\\
 &  &  &  & 1 & \cdots & 1
\end{pmatrix},\label{eq:def of M}
\end{equation}
where the $k$th block is a $1\times\left|\beta_{k}\right|$ row-vector
filled with $1$'s. $M$ acts by summing the fine-grained probability
distribution in each block of micro states into a single value, which
is the total probability of finding the system in \textit{any} micro
state of that block. We will call such $M$ a \textit{coarse-graining
matrix.}

Any CG matrix $M$ admits the right inverse $M^{+}$ such that $MM^{+}=I$
is an identity on $\mathbb{R}^{\mathcal{\mathsf{B}}}$. It is easy
to check that it is the $\mathsf{\left|A\right|}\times\mathsf{\left|B\right|}$
block diagonal matrix of the form
\begin{equation}
M^{+}:=\begin{pmatrix}\left|\beta_{1}\right|^{-1}\\
\vdots\\
\left|\beta_{1}\right|^{-1}\\
 & \ddots\\
 &  &  & \left|\beta_{\mathsf{\left|B\right|}}\right|^{-1}\\
 &  &  & \vdots\\
 &  &  & \left|\beta_{\mathsf{\left|B\right|}}\right|^{-1}
\end{pmatrix}\label{eq:def of M+}
\end{equation}
where the $k$th block is a $\left|\beta_{k}\right|\times1$ column-vector
filled with $\left|\beta_{k}\right|^{-1}$. This is the Moore-Penrose
pseudo inverse \cite{Penrose54} of $M$, which means that $P:=M^{+}M$
is an orthogonal projection on the subspace $\left(\mathsf{ker}M\right)^{\perp}\subseteq\mathbb{R}^{\mathcal{\mathsf{A}}}$.
Moreover, restriction of $M$ to $\left(\mathsf{ker}M\right)^{\perp}$
is an isomorphism $M:\left(\mathsf{ker}M\right)^{\perp}\longrightarrow\textrm{\ensuremath{\mathsf{im}}}\,M$
and since $\left(\mathsf{ker}M\right)^{\perp}=\textrm{\ensuremath{\mathsf{im}}}\,P$
and $\textrm{\ensuremath{\mathsf{im}}}\,M=\mathbb{R}^{\mathcal{\mathsf{B}}}$
it follows that $\textrm{\ensuremath{\mathsf{im}}}\,P\cong\mathbb{R}^{\mathcal{\mathsf{B}}}$.
The isomorphism between the image of $P$ and the image of $M$ implies
that $P$ erases the same fine grained information as $M$. We will
call $P$ a \textit{coarse-graining projection} which has the block
diagonal form 
\begin{equation}
P:=\begin{pmatrix}\left|\beta_{1}\right|^{-1} & \ldots & \left|\beta_{1}\right|^{-1}\\
\vdots & \ddots & \vdots\\
\left|\beta_{1}\right|^{-1} & \ldots & \left|\beta_{1}\right|^{-1}\\
 &  &  & \ddots\\
 &  &  &  & \left|\beta_{\mathsf{\left|B\right|}}\right|^{-1} & \ldots & \left|\beta_{\mathsf{\left|B\right|}}\right|^{-1}\\
 &  &  &  & \vdots & \ddots & \vdots\\
 &  &  &  & \left|\beta_{\mathsf{\left|B\right|}}\right|^{-1} & \ldots & \left|\beta_{\mathsf{\left|B\right|}}\right|^{-1}
\end{pmatrix},\label{eq:def of P}
\end{equation}

\noindent where the $k$th block is a $\left|\beta_{k}\right|\times\left|\beta_{k}\right|$
matrix filled with $\left|\beta_{k}\right|^{-1}$. $P$ acts by averaging
over the probabilities in each block. The advantage of $P$ over $M$
is that $P$ leaves the result of CG in the subspace of $\mathbb{R}^{\mathcal{\mathsf{A}}}$,
which allows a direct comparison of states before and after CG.

\subsection{Compatibility with Dynamics\label{subsec:classical Compatibility-with-Dynamics}}

In this section we show how this notion of CG allows to study the
dynamics of some select properties of the system without the need
to understand the dynamics of all its degrees of freedom. We will
focus on continuous time Markov processes (CTMP) because they are
common in classical models and are closely related to quantum dynamics.

The idea is that given a dynamical system we can coarse-grain it and
derive new dynamical rules that generate consistent time evolutions
in the coarse-grained state-space (see fig. \ref{fig:dynamics reduction diagram}).
We will say that CG is \textit{compatible }with dynamics if there
is a generator that governs time evolutions in the reduced state-space.
The main question that we address here is how to recognize compatible
CGs and how to derive the reduced generator. 

The dynamical rules of CTMP can be specified with a \textit{transition
rate matrix} $Q$ such that \cite{Breuer02}
\begin{equation}
\frac{d}{dt}p_{\mathcal{\mathsf{A}}}=Q\,p_{\mathcal{\mathsf{A}}}\label{eq:Markovian dynamics diff-eq for a classical state}
\end{equation}
for $p_{\mathcal{\mathsf{A}}}\in\mathbb{R}^{\mathcal{\mathsf{A}}}$.
The off-diagonal elements $Q_{ij}$ specify the rate of transitions
between states $\alpha_{j}\longmapsto\alpha_{i}$ while the diagonal
elements $Q_{jj}:=-\sum_{i\neq j}Q_{ij}$ specify the total rate of
transitions out of states $\alpha_{j}$. For an initial probabilistic
state $p_{\mathcal{\mathsf{A}}}\left(0\right)$, the subsequent states
are given by the solutions of Eq. (\ref{eq:Markovian dynamics diff-eq for a classical state})
as $p_{\mathcal{\mathsf{A}}}\left(t\right)=e^{tQ}p_{\mathcal{\mathsf{A}}}\left(0\right)$,
where $Q$ generates time evolutions similarly to the Hamiltonian
in quantum mechanics (strictly speaking $Q$ is closer in nature to
the Lindblad operator rather than the Hamiltonian). 

Now, consider a CG $\mathcal{\mathsf{B}}$ of $\mathcal{\mathsf{A}}$
represented by the matrix $M:\mathbb{R}^{\mathcal{\mathsf{A}}}\longrightarrow\mathbb{R}^{\mathcal{\mathsf{B}}}$.
The coarse-grained probabilistic state evolves according to $p_{\mathcal{\mathsf{B}}}\left(t\right):=Mp_{\mathcal{\mathsf{A}}}\left(t\right)$
and a priori there is no reason to assume that it also evolves as
a CTMP. However, this is exactly what we require from the compatibility
of CG with dynamics in order to be able to generate time evolutions
in the reduced state-space. The following theorem provides the necessary
and sufficient conditions for it to be true.
\begin{thm}
\label{thm:PQ=00003DPQP}Let $Q$ be a transition rate matrix as in
Eq. (\ref{eq:Markovian dynamics diff-eq for a classical state}),
let $M$ be a coarse-graining matrix as in Eq. (\ref{eq:def of M})
and let $P$ be a coarse-graining projection as in Eq. (\ref{eq:def of P}).
Then, the coarse-grained state $p_{\mathcal{\mathsf{B}}}:=Mp_{\mathcal{\mathsf{A}}}$
evolves as a CTMP for all $p_{\mathcal{\mathsf{A}}}$ if and only
if 
\begin{equation}
PQ=PQP.\label{eq:PQ=00003DPQP}
\end{equation}
The reduced transition rate matrix $\tilde{Q}$ such that $\frac{d}{dt}p_{\mathcal{\mathsf{B}}}=\tilde{Q}\,p_{\mathcal{\mathsf{B}}}$
is then given by $\tilde{Q}:=MQM^{+}$.
\end{thm}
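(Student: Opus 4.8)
The plan is to first translate the phrase ``evolves as a CTMP for all $p_{\mathsf{A}}$'' into the precise statement that there exists a \emph{single} transition rate matrix $\tilde{Q}$ on $\mathbb{R}^{\mathsf{B}}$, independent of the initial condition, with $\frac{d}{dt}p_{\mathsf{B}} = \tilde{Q}\,p_{\mathsf{B}}$; after that the argument is linear algebra driven by the two identities $M^{+}M = P$ (the definition of $P$) and $MP = M$, the latter holding because $M^{+}$ is a right inverse, so $MP = MM^{+}M = M$.

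For the ``only if'' direction I would start from $p_{\mathsf{A}}(t) = e^{tQ}p_{\mathsf{A}}(0)$ and $p_{\mathsf{B}}(t) = Mp_{\mathsf{A}}(t)$, differentiate the reduced equation at $t = 0$ to get $MQ\,p_{\mathsf{A}}(0) = \tilde{Q}M\,p_{\mathsf{A}}(0)$, and then evaluate this at the point masses $\delta_{\alpha_{i}}$, which are legitimate probability distributions and form a basis of $\mathbb{R}^{\mathsf{A}}$; this upgrades the relation to the matrix identity $MQ = \tilde{Q}M$. Since also $M = MP$, we get $MQ = \tilde{Q}M = \tilde{Q}MP = MQP$, and left-multiplying by $M^{+}$ and inserting $M^{+}M = P$ yields $PQ = PQP$, i.e. Eq. (\ref{eq:PQ=00003DPQP}).

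For the ``if'' direction I would set $\tilde{Q} := MQM^{+}$, left-multiply the hypothesis $PQ = PQP$ by $M$ and use $MP = M$ to obtain $MQ = MQP = MQM^{+}M = \tilde{Q}M$; consequently $\frac{d}{dt}p_{\mathsf{B}} = MQp_{\mathsf{A}} = \tilde{Q}Mp_{\mathsf{A}} = \tilde{Q}p_{\mathsf{B}}$, an autonomous linear equation whose solution $p_{\mathsf{B}}(t) = e^{t\tilde{Q}}p_{\mathsf{B}}(0)$ has the required semigroup form. To finish, one must verify that $\tilde{Q}$ is an honest transition rate matrix: its column sums vanish because $\mathbf{1}_{\mathsf{B}}^{T}M = \mathbf{1}_{\mathsf{A}}^{T}$ (each column of $M$ carries a single $1$) and $\mathbf{1}_{\mathsf{A}}^{T}Q = 0$, so $\mathbf{1}_{\mathsf{B}}^{T}\tilde{Q} = \mathbf{1}_{\mathsf{A}}^{T}QM^{+} = 0$; and for $k \neq l$ one has $\tilde{Q}_{kl} = |\beta_{l}|^{-1}\sum_{i\in b_{k},\,j\in b_{l}}Q_{ij}$, a nonnegative combination of off-diagonal entries of $Q$, hence $\tilde{Q}_{kl}\geq 0$ (and the diagonal is then forced to have the right sign by the column-sum condition). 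Therefore $e^{t\tilde{Q}}$ is stochastic for $t\geq 0$ and $p_{\mathsf{B}}$ genuinely evolves as a CTMP with reduced generator $\tilde{Q} = MQM^{+}$.

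The step I expect to be the main obstacle is this last one, namely confirming that the formally defined $\tilde{Q} = MQM^{+}$ obeys the sign and normalization constraints of a transition rate matrix rather than merely generating some linear semigroup; a secondary point that needs a word of justification is the passage from an identity valid on probability distributions to the matrix identity $MQ = \tilde{Q}M$, which is why I would invoke the point-mass basis explicitly. Everything else is routine manipulation of $MP = M$ and $M^{+}M = P$.
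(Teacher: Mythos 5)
Your proposal is correct and follows essentially the same route as the paper: both directions hinge on the identities $M^{+}M=P$ and $MP=M$ to pass between $PQ=PQP$ and $MQ=\tilde{Q}M$ with $\tilde{Q}=MQM^{+}$. The two points you flag as needing care --- upgrading the identity on probability vectors to a matrix identity via the point-mass basis, and checking that $\tilde{Q}=MQM^{+}$ actually has vanishing column sums and nonnegative off-diagonal entries --- are both handled correctly and are welcome additions that the paper's own proof leaves implicit.
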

\begin{proof}
If $PQ=PQP$ then multiplying it by $M$ on the left we get $MQ=MQM^{+}M$
and therefore 
\[
\frac{d}{dt}p_{\mathcal{\mathsf{B}}}=M\frac{d}{dt}p_{\mathcal{\mathsf{A}}}=MQp_{\mathcal{\mathsf{A}}}=\tilde{Q}p_{\mathcal{\mathsf{B}}}
\]
where \textit{$\tilde{Q}=MQM^{+}$}. This proves the ``if'' direction.

On the other hand if $p_{\mathcal{\mathsf{B}}}$ evolves as a CTMP
then there is a $\tilde{Q}$ such that $\frac{d}{dt}p_{\mathcal{\mathsf{B}}}=\tilde{Q}\,p_{\mathcal{\mathsf{B}}}$
. Therefore
\[
MQp_{\mathcal{\mathsf{A}}}=M\frac{d}{dt}p_{\mathcal{\mathsf{A}}}=\frac{d}{dt}p_{\mathcal{\mathsf{B}}}=\tilde{Q}\,p_{\mathcal{\mathsf{B}}}=\tilde{Q}Mp_{\mathcal{\mathsf{A}}}.
\]
Since it has to hold for all $p_{\mathsf{A}}$ we are left with $MQ=\tilde{Q}M$.
Multiplying it by $M^{+}$ from the right we get $\tilde{Q}=MQM^{+}$.
If we substitute $\tilde{Q}$ back into $MQ=\tilde{Q}M$ and multiply
by $M^{+}$ from the left we get $PQ=PQP$. Hence the ``only if''.
\end{proof}
Thus, for example, in the case of random walk of fig. \ref{fig:random walk-1}(a)
with CG by the columns, we have the CG matrix 
\[
M:=\begin{pmatrix}1\\
 & 1 & 1\\
 &  &  & 1 & 1 & 1
\end{pmatrix}\,\,\,\,\,\,\,\,\,\,\,\,\,\,\,\,\,\,\,\,\,\,\,M^{+}:=\begin{pmatrix}1\\
 & 1/2\\
 & 1/2\\
 &  & 1/3\\
 &  & 1/3\\
 &  & 1/3
\end{pmatrix}.
\]
And the transition rate matrix (diagonal elements are just the negatives
of the column's sum)

\[
Q:=\begin{pmatrix}Q_{11} & a & a & 0 & 0 & 0\\
a & Q_{22} & b & c-\delta & c-\epsilon & c+\delta+\epsilon\\
a & b & Q_{33} & c+\delta & c+\epsilon & c-\delta-\epsilon\\
0 & c-\delta & c+\delta & Q_{44} & d & 0\\
0 & c-\epsilon & c+\epsilon & d & Q_{55} & e\\
0 & c+\delta+\epsilon & c-\delta-\epsilon & 0 & e & Q_{66}
\end{pmatrix}.
\]
It is straight forward to check that the compatibility condition (\ref{eq:PQ=00003DPQP})
holds and the reduced transition rate matrix is
\[
\tilde{Q}=MQM^{+}=\begin{pmatrix}-2a & a & 0\\
2a & -3c-a & 2c\\
0 & 3c & -2c
\end{pmatrix}.
\]
This matrix generates the random walk of fig. \ref{fig:random walk-1}(b).

The compatibility condition (\ref{eq:PQ=00003DPQP}) has an intuitive
interpretation. If we understand the image of $P$ to be the subspace
of coarse-grained states, then the image of $P^{\perp}:=I-P$ must
be the subspace containing fine-grained information. The compatibility
condition $PQ=PQP$ is equivalent to $PQP^{\perp}=0$ which means
that $Q$ does not map fine grained information into the coarse-grained
subspace. Then, time evolution of the coarse-grained state cannot
be affected by the fine grained details. This is a sensible requirement
because if fine grained details could affect coarse-grained evolution,
it would not be possible to throw them away and expect to tell how
the coarse-grained state will evolve. 

In more concrete terms, what the compatibility condition ensures is
that the rate of transitions between macro states is a well defined
property. To see that, let's try to derive the rate of transition
between macro states from the original rates between micro states.
The total rate of transitions from a micro state $\alpha_{i}\in\beta_{k}$
to any micro state in $\beta_{k'}$ ($k\neq k'$) is given by $r\left(\alpha_{i}\longmapsto\beta_{k'}\right):=\sum_{j\in b_{k'}}Q_{ji}$.
If the value of $r\left(\alpha_{i}\longmapsto\beta_{k'}\right)$ varies
with different $\alpha_{i}\in\beta_{k}$ then knowledge of the initial
macro state $\beta_{k}$ is not enough to tell the rate of transitions
to $\beta_{k'}$. But if $r\left(\alpha_{i}\longmapsto\beta_{k'}\right)$
is the same for all $\alpha_{i}\in\beta_{k}$ then it does not matter
in which micro state of $\beta_{k}$ we start, the rate of transitions
from $\beta_{k}$ to $\beta_{k'}$ is given by $r\left(\alpha_{i}\longmapsto\beta_{k'}\right)$
for any $\alpha_{i}\in\beta_{k}$. Therefore, the notion of rate of
transitions between macro states is meaningless unless the rates $r\left(\alpha_{i}\longmapsto\beta_{k'}\right)$
are uniform over $\alpha_{i}\in\beta_{k}$ for all $\beta_{k}$ and
$\beta_{k'}$. The following corollary to Theorem \ref{thm:PQ=00003DPQP}
formalizes this argument.
\begin{cor}
\label{cor:total rates must be uniform}Let $Q$ be a transition rate
matrix and
\[
r\left(\alpha_{i}\longmapsto\beta_{k'}\right):=\sum_{j\in b_{k'}}Q_{ji}
\]
the total rate of transitions from $\alpha_{i}\in\beta_{k}$ to\textup{
$\beta_{k'}$.} Then the compatibility condition $PQ=PQP$ is equivalent
to $r\left(\alpha_{i}\longmapsto\beta_{k'}\right)$ being constant
over the subset $\beta_{k}$, for all $\beta_{k}$ and $\beta_{k'}$. 
\end{cor}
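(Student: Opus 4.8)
The plan is to translate the operator identity $PQ=PQP$ into a statement about the entries of the matrix $MQ$, whose rows are precisely the quantities $r(\alpha_i\mapsto\beta_{k'})$, and then invoke the elementary fact that a function on a finite set coincides with its own mean at every point exactly when it is constant.

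First I would put the compatibility condition into a more convenient form. Since $P=M^{+}M$ and $MM^{+}=I$, we have $MP=(MM^{+})M=M$, so left-multiplying $PQ=PQP$ by $M$ yields $MQ=MQP$; conversely, left-multiplying $MQ=MQP$ by $M^{+}$ gives back $PQ=PQP$. Hence $PQ=PQP$ holds if and only if $MQ=MQP$.

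Next I would compute the relevant entries, indexing the rows of $MQ$ by macro states and the columns by micro states. Directly from $Q_{ji}$ and the block form of $M$ one gets $(MQ)_{k',i}=\sum_{j\in b_{k'}}Q_{ji}=r(\alpha_i\mapsto\beta_{k'})$. Using $P_{ji}=|\beta_k|^{-1}$ when $j,i$ lie in the same block $b_k$ and $P_{ji}=0$ otherwise, for $\alpha_i\in\beta_k$ one obtains $(MQP)_{k',i}=|\beta_k|^{-1}\sum_{j\in b_k}r(\alpha_j\mapsto\beta_{k'})$, i.e.\ the average of $r(\,\cdot\,\mapsto\beta_{k'})$ over the block $\beta_k$. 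Therefore $MQ=MQP$ is the statement that for every micro state $\alpha_i\in\beta_k$ and every target block $\beta_{k'}$, the value $r(\alpha_i\mapsto\beta_{k'})$ equals its average over $\beta_k$. Restricting the function $i\mapsto r(\alpha_i\mapsto\beta_{k'})$ to each block $\beta_k$ and applying the ``equals its mean everywhere $\Rightarrow$ constant'' observation gives the claimed equivalence; one may also remark that constancy for all $k'\neq k$ already forces it for $k'=k$, since $\sum_{k'}r(\alpha_i\mapsto\beta_{k'})=\sum_j Q_{ji}=0$.

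There is no genuine obstacle in this argument; it is a direct unpacking of Theorem~\ref{thm:PQ=00003DPQP}. The only thing that requires care is the bookkeeping of the block structures of $M$ and $P$, and keeping straight that in $MQ$ the row index is the macro state while the (free) column index is the micro state whose outgoing rates we are comparing.
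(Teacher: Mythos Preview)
Your argument is correct and follows essentially the same route as the paper: both reduce the operator identity to the elementwise statement that $r(\alpha_i\mapsto\beta_{k'})$ equals its block average, hence is constant on each $\beta_k$. The only cosmetic difference is that you first pass from $PQ=PQP$ to the equivalent $MQ=MQP$ and compute the $(k',i)$ entries of $MQ$ and $MQP$, whereas the paper computes the $(i',i)$ entries of $PQ$ and $PQP$ directly (picking up an extra common factor $|\beta_{k'}|^{-1}$ that cancels); your version is marginally cleaner and your closing remark about the $k'=k$ case being forced by the others is a nice addition not in the paper.
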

\begin{proof}
By definition of $P$ (Eq. (\ref{eq:def of P})) we calculate the
matrix elements of $PQ$ to be \texttt{
\[
\left(PQ\right)_{i'i}=\sum_{j=1}^{\mathsf{\left|A\right|}}P_{i'j}Q_{ji}=\frac{1}{\left|\beta_{k'}\right|}\sum_{j\in b_{k'}}Q_{ji}=\frac{r\left(\alpha_{i}\longmapsto\beta_{k'}\right)}{\left|\beta_{k'}\right|}
\]
}where $k'$ is the index of the block s.t. $i'\in b_{k'}$. Similarly
the matrix elements of $PQP$ are\texttt{
\begin{align*}
\left(PQP\right)_{i'i} & =\sum_{j=1}^{\mathsf{\left|A\right|}}\left(PQ\right)_{i'j}P_{ji}=\frac{1}{\left|\beta_{k}\right|}\sum_{j\in b_{k}}\left(PQ\right)_{i'j}\\
 & =\frac{1}{\left|\beta_{k}\right|\left|\beta_{k'}\right|}\sum_{j\in b_{k}}r\left(\alpha_{j}\longmapsto\beta_{k'}\right)
\end{align*}
}where $k$ is the index of the block s.t. $i\in b_{k}$. Element-wise,
the condition $PQ=PQP$ then reads 
\[
r\left(\alpha_{i}\longmapsto\beta_{k'}\right)=\frac{1}{\left|\beta_{k}\right|}\sum_{j\in b_{k}}r\left(\alpha_{j}\longmapsto\beta_{k'}\right).
\]
The right hand side depends on $i$ only through the block index $k$,
therefore this condition states that $r\left(\alpha_{i}\longmapsto\beta_{k'}\right)$
is constant for all $\alpha_{i}\in\beta_{k}$. 

On the other hand if $r\left(\alpha_{i}\longmapsto\beta_{k'}\right)$
is constant for all $\alpha_{i}\in\beta_{k}$, then 
\begin{align*}
\left(PQP\right)_{i'i} & =\frac{1}{\left|\beta_{k}\right|\left|\beta_{k'}\right|}\sum_{j\in b_{k}}r\left(\alpha_{j}\longmapsto\beta_{k'}\right)\\
 & =\frac{1}{\left|\beta_{k'}\right|}r\left(\alpha_{i}\longmapsto\beta_{k'}\right)=\left(PQ\right)_{i'i}.
\end{align*}
\end{proof}
It is worth pointing out the compatibility condition of Corollary
\ref{cor:total rates must be uniform} is the defining property of
an \textit{equitable partition }of a weighted graph specified by $Q$
\cite{Barrett17}. The problem of finding a CG compatible with dynamics
is therefore equivalent to the problem of finding an equitable partition
of the weighted graph specified by $Q$. 

\subsection{Coarse-Graining and Symmetries\label{subsec:Associated-Symmetries}}

So far we have specified CG with the choice of equivalence classes
that determine indistinguishable states. In the following we show
how CG can also be specified with group representations. 
\begin{prop}
\label{prop: P=00003DP_G}Let $G$ be a group that permutes elements
of the state-space $\mathcal{\mathsf{A}}$, and let the permutation
matrices $D\left(G\right)$ be its representation on $\mathbb{R}^{\mathcal{\mathsf{A}}}$.
Then, the symmetrizer $P:=\left|G\right|^{-1}\sum_{g\in G}D\left(g\right)$
is a coarse-graining projection associated with partition of $\mathcal{\mathsf{A}}$
into orbits of $G$. 
\end{prop}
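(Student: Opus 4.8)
The plan is to show directly that the symmetrizer $P:=\left|G\right|^{-1}\sum_{g\in G}D\left(g\right)$ has exactly the block-diagonal form of Eq. (\ref{eq:def of P}), with blocks indexed by the orbits of $G$ acting on $\mathcal{\mathsf{A}}$. First I would recall that a permutation $g$ acting on $\mathcal{\mathsf{A}}$ is represented by the $0/1$ matrix $D\left(g\right)$ with $D\left(g\right)_{ij}=1$ iff $g\cdot\alpha_{j}=\alpha_{i}$. Hence $\left(\sum_{g}D\left(g\right)\right)_{ij}$ counts the number of group elements sending $\alpha_{j}$ to $\alpha_{i}$; this is nonzero precisely when $\alpha_{i}$ and $\alpha_{j}$ lie in the same orbit, so after reordering $\mathcal{\mathsf{A}}$ to group each orbit together, $P$ is block diagonal with one block per orbit. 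That establishes the partition structure; what remains is to verify that within an orbit the block is the uniform averaging matrix $\left|\beta_{k}\right|^{-1}$, i.e. that $P$ coincides with the coarse-graining projection of that orbit partition.

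The key computation is the orbit-stabilizer count. Fix $\alpha_{j}$ in an orbit $\beta_{k}$ and let $H_{j}$ be its stabilizer; then for any $\alpha_{i}\in\beta_{k}$ the set $\{g:g\cdot\alpha_{j}=\alpha_{i}\}$ is a coset of $H_{j}$, so it has size $\left|H_{j}\right|=\left|G\right|/\left|\beta_{k}\right|$, independent of which $\alpha_{i}$ in the orbit we chose. Dividing by $\left|G\right|$ gives $P_{ij}=\left|\beta_{k}\right|^{-1}$ for all $\alpha_{i},\alpha_{j}\in\beta_{k}$ and $P_{ij}=0$ otherwise — exactly Eq. (\ref{eq:def of P}). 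Once this matrix identity is in hand, the conclusion that $P$ is a coarse-graining projection associated with the orbit partition is immediate from the definitions in Section \ref{subsec:Basic-Definitions}: that section already established that the matrix (\ref{eq:def of P}), arising from any partition, is the coarse-graining projection $M^{+}M$ for that partition.

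Alternatively — and this is the cleaner route to present — I would verify abstractly that $P$ is the orthogonal projection onto the subspace of $G$-invariant vectors and then identify that subspace with $\left(\ker M\right)^{\perp}$ for the orbit coarse-graining matrix $M$. That $P^{2}=P$ follows from the rearrangement $P^{2}=\left|G\right|^{-2}\sum_{g,h}D\left(gh\right)=\left|G\right|^{-1}\sum_{g'}D\left(g'\right)=P$ using that $g\mapsto gh$ is a bijection of $G$; that $P$ is self-adjoint (hence the projection is orthogonal) follows because each $D\left(g\right)$ is a permutation matrix, so $D\left(g\right)^{\mathsf{T}}=D\left(g\right)^{-1}=D\left(g^{-1}\right)$, and summing over $G$ is invariant under $g\mapsto g^{-1}$. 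Its image is the fixed-point space $\{p:D\left(g\right)p=p\ \forall g\}$, which consists exactly of the functions on $\mathcal{\mathsf{A}}$ that are constant on each $G$-orbit — and this is precisely $\mathsf{im}\,M^{+}=\left(\ker M\right)^{\perp}$ for the orbit coarse-graining matrix $M$. Since a self-adjoint idempotent is determined by its image, $P$ equals the coarse-graining projection of the orbit partition.

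I expect the only real subtlety to be bookkeeping rather than mathematics: one must be careful that $G$ acts by genuine permutations (so $D\left(g\right)$ really is a permutation matrix, not a more general invertible matrix) and that the ordering of $\mathcal{\mathsf{A}}$ is chosen compatibly with the orbit partition, as was done for general partitions in Section \ref{subsec:Basic-Definitions}. Given those conventions, the orbit-stabilizer argument supplies the uniform entries with no further effort, so the main obstacle is essentially choosing which of the two equivalent presentations (direct entrywise computation versus the abstract ``orthogonal projection onto invariants'' argument) is most economical; I would lead with the abstract one and remark that the entrywise form then reproduces (\ref{eq:def of P}).
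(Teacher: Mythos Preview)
Your proposal is correct. Your first route --- the entrywise orbit--stabilizer count --- is essentially what the paper does: the paper computes $P\hat{\alpha}$ column by column, decomposing $G$ into cosets of the stabilizer $G_{\alpha}$, observing that each coset sends $\alpha$ to a distinct orbit element, and invoking $\left|G\right|/\left|G_{\alpha}\right|=\left|G\left(\alpha\right)\right|$ to obtain the uniform weight $\left|G\left(\alpha\right)\right|^{-1}$. Your preferred abstract route --- show $P^{2}=P$, $P^{\mathsf{T}}=P$, identify $\mathsf{im}\,P$ with the $G$-invariant (orbit-constant) functions, and note that this subspace is exactly $\left(\ker M\right)^{\perp}$ for the orbit coarse-graining $M$ --- is a genuinely different presentation that the paper does not use. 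It is cleaner in that it never touches cosets or individual matrix entries and makes the orthogonality of the projection manifest from $D\left(g\right)^{\mathsf{T}}=D\left(g^{-1}\right)$; the paper's direct computation, by contrast, delivers the explicit block form (\ref{eq:def of P}) without a separate identification-of-subspaces step. Either argument is complete; your caveat about $G$ acting by genuine permutations is the right hypothesis to flag.
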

\begin{proof}
To show that $P$ is a CG projection for any $G$ it is sufficient
to show that it acts as a CG projection on any basis element $\hat{\alpha}\in\mathbb{R}^{\mathcal{\mathsf{A}}}$
(for clarity we omit the element's index). For a given $\hat{\alpha}$
we define the subgroup that stabilizes it as $G_{\alpha}:=\left\{ g\in G\,|\,g\left(\alpha\right)=\alpha\right\} $.
Since cosets of $G_{\alpha}$ form a partition of $G$ we can write
\[
P=\left|G\right|^{-1}\sum_{C\in G/G_{a}}\,\,\,\sum_{g\in C}D\left(g\right)
\]
where $C$ runs over all distinct cosets. If $g_{1},g_{2}\in C$ belong
to the same coset of $G_{\alpha}$ then clearly $g_{1}\left(\alpha\right)=g_{2}\left(\alpha\right)$.
On the other hand if $g_{1}$, $g_{2}$ belong to different cosets
then $g_{1}\left(\alpha\right)=g_{2}\left(\alpha\right)$ implies
$g_{1}^{-1}g_{2}\in G_{\alpha}$ so $g_{2}=g_{1}h$ for some $h\in G_{\alpha}$
but that contradicts their residence in different cosets, therefore
$g_{1}\left(\alpha\right)\neq g_{2}\left(\alpha\right)$. Applying
these rules and denoting with $G\left(\alpha\right)$ the orbit of
$\alpha$, we get 
\begin{align*}
P\hat{\alpha} & =\left|G\right|^{-1}\sum_{C\in G/G_{a}}\,\,\,\sum_{g\in C}D\left(g\right)\hat{\alpha}\\
 & =\left|G_{a}\right|/\left|G\right|\sum_{C\in G/G_{a}}D\left(g\in C\right)\hat{\alpha}\\
 & =\left|G_{a}\right|/\left|G\right|\sum_{\alpha'\in G\left(\alpha\right)}\hat{\alpha}'.
\end{align*}
It is a well known consequence of the orbit-stabilizer theorem \cite{Armstrong13}
that $\left|G\right|/\left|G_{a}\right|=\left|G\left(\alpha\right)\right|$,
so in fact 
\[
P\hat{\alpha}=\left|G\left(\alpha\right)\right|^{-1}\sum_{\alpha'\in G\left(\alpha\right)}\hat{\alpha}'.
\]
Recalling the general form of a CG projection (\ref{eq:def of P}),
we see that $P$ acts on $\hat{\alpha}$ as the CG projection constructed
from partition of $\mathcal{\mathsf{A}}$ into orbits of $G$. 
\end{proof}
Thus, any group $G$ acting on $\mathcal{\mathsf{A}}$ specifies a
CG associated with the orbits of $G$. Then, if we treat symmetrizations
as a special case of CG, we can specialize the general compatibility
condition of Theorem \ref{thm:PQ=00003DPQP} to this case and express
it in terms of group representations. 
\begin{thm}
\label{thm:SUM(=00005BD(g),Q=00005D)=00003D0 <=00003D> PQ=00003DPQP}
Let 
\begin{equation}
P=\left|G\right|^{-1}\sum_{g\in G_{\mathcal{}}}D\left(g\right)\label{eq:P=00003Dsum_G(D(g))}
\end{equation}
be a symmetrizer with respect to a group $G$, and let $Q$ be a transition
rate matrix. Then, the compatibility condition $PQ=PQP$ is equivalent
to 
\begin{equation}
P\sum_{g\in G}\left[D\left(g\right),Q\right]=0.\label{eq:P*SUM_g(=00005BD,Q=00005D)}
\end{equation}
If in addition $Q=Q^{T}$, then it simplifies to 
\begin{equation}
\sum_{g\in G}\left[D\left(g\right),Q\right]=0.\label{eq:SUM_g(=00005BD,Q=00005D)}
\end{equation}
\end{thm}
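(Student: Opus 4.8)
The plan is to exploit the fact that $P$ is the symmetrizer, so that it already absorbs each $D(g)$: the defining property $PD(g)=D(g)P=P$ for all $g\in G$ (which follows from $P$ being an average over the group, using the rearrangement $D(g)P = |G|^{-1}\sum_{h}D(gh) = P$) will do all the work. I would start from the reformulation of the compatibility condition noted in the text after Theorem \ref{thm:PQ=00003DPQP}, namely that $PQ=PQP$ is equivalent to $PQ(I-P)=0$, i.e.\ to $PQ = PQP$ holding as an identity of operators on $\mathbb{R}^{\mathcal{\mathsf{A}}}$. Everything then reduces to rewriting $\sum_{g}[D(g),Q]$ and hitting it with $P$.

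Concretely, the first step is to compute $P\sum_{g\in G}[D(g),Q] = \sum_{g\in G}\bigl(PD(g)Q - PD(g)QD(g)^{-1}\cdot D(g)\bigr)$; wait — more cleanly, $P\sum_{g}[D(g),Q] = \sum_{g} PD(g)Q - \sum_{g} PQD(g)$. Using $PD(g)=P$ in the first sum gives $|G|\,PQ$. For the second sum I would not simplify $PQD(g)$ directly; instead note that $\sum_{g}D(g) = |G|P$, so $\sum_{g}PQD(g) = PQ\sum_{g}D(g) = |G|\,PQP$. Hence
\begin{equation}
P\sum_{g\in G}[D(g),Q] = |G|\,(PQ - PQP),\nonumber
\end{equation}
and this vanishes if and only if $PQ=PQP$. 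That establishes the equivalence of \eqref{eq:P*SUM_g(=00005BD,Q=00005D)} with the compatibility condition. This computation is short and I expect no real obstacle here; the only thing to be careful about is the direction of the multiplication ($D(g)P=P$ versus $PD(g)=P$ — both hold since $P$ is a two-sided average, but one should invoke the right one in each sum).

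For the final refinement, suppose additionally $Q=Q^{T}$. Note that $P$ is symmetric ($P=P^{T}$, visible from \eqref{eq:def of P}, or because each $D(g)$ is a permutation matrix so $D(g)^{T}=D(g)^{-1}=D(g^{-1})$ and summing over the group is invariant under $g\mapsto g^{-1}$). The operator $T:=\sum_{g}[D(g),Q]$ satisfies $T^{T} = \sum_{g}[D(g),Q]^{T} = \sum_{g}(Q D(g)^{T} - D(g)^{T}Q) = \sum_{g}[Q,D(g^{-1})] = -T$ after reindexing $g\mapsto g^{-1}$; so $T$ is antisymmetric. Now $PT=0$ from the previous step. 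Taking transposes, $T^{T}P^{T}=0$, i.e.\ $TP=0$ as well (using $T^{T}=-T$, $P^{T}=P$). Then $T = T - TP - PT + PTP$? — better: from $PT=0$ and $TP=0$ one gets $T = (I-P)T(I-P)$, so $T$ is supported entirely on $\operatorname{im}(I-P)$; but I still need to conclude $T=0$. The cleaner route: the hypothesis $Q=Q^{T}$ plus symmetry of $P$ makes the map $X\mapsto PX$ behave well — actually the sharp statement is that $Q$ symmetric forces the equitable-partition condition of Corollary \ref{cor:total rates must be uniform} to be bidirectional, which is what promotes $PT=0$ to $T=0$. I would argue it as follows: from \eqref{eq:P*SUM_g(=00005BD,Q=00005D)} we have $PQ=PQP$, and transposing with $Q^{T}=Q$, $P^{T}=P$ gives $QP = PQP$; hence $QP = PQ$, i.e.\ $Q$ commutes with $P$. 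Then $[D(g),Q]$ summed over $g$: since $[P,Q]=0$ and $P=|G|^{-1}\sum_g D(g)$, we get $\sum_g[D(g),Q] = |G|[P,Q] = 0$, which is \eqref{eq:SUM_g(=00005BD,Q=00005D)}. Conversely \eqref{eq:SUM_g(=00005BD,Q=00005D)} trivially implies \eqref{eq:P*SUM_g(=00005BD,Q=00005D)}. The main subtlety to watch is precisely this inference $PQ=PQP \Rightarrow QP=PQ$ under symmetry: it uses both $P^{T}=P$ and $Q^{T}=Q$ together, and the fact that a matrix equal to $PQP$ (which is symmetric) must coincide with both $PQ$ and its transpose $QP$.
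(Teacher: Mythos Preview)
Your proposal is correct and follows essentially the same route as the paper. For the first equivalence you derive the identity $P\sum_{g}[D(g),Q]=|G|(PQ-PQP)$ by splitting the commutator and using $PD(g)=P$ together with $\sum_g D(g)=|G|P$; the paper obtains the very same identity by expanding $PQP$ as a double sum over $g,g'$ and pulling out the commutator, so the two computations are just reorderings of each other. For the symmetric case your final argument (transpose $PQ=PQP$ to get $QP=PQP=PQ$, hence $[P,Q]=0$, hence $\sum_g[D(g),Q]=|G|[P,Q]=0$) is exactly what the paper does; your earlier detour through the antisymmetry of $T$ is unnecessary but harmless.
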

\begin{proof}
By definition (\ref{eq:P=00003Dsum_G(D(g))}) of $P$ and the fact
that $P=P^{2}$ we get
\begin{eqnarray*}
PQP & = & \left|G\right|^{-2}\sum_{g,g'\in G}D\left(g\right)QD\left(g'\right)\\
 & = & \left|G\right|^{-2}\sum_{g,g'\in G}D\left(g\right)\left(D\left(g'\right)Q-\left[D\left(g'\right),Q\right]\right)\\
 & = & \left(\left|G\right|^{-1}\sum_{g\in G}D\left(g\right)\right)\left(\left|G\right|^{-1}\sum_{g'\in G}D\left(g'\right)\right)Q\\
 &  & -\left|G\right|^{-1}\left(\left|G\right|^{-1}\sum_{g\in G}D\left(g\right)\right)\left(\sum_{g'\in G}\left[D\left(g'\right),Q\right]\right)\\
 & = & PQ-\left|G\right|^{-1}P\left(\sum_{g'\in G}\left[D\left(g'\right),Q\right]\right),
\end{eqnarray*}
hence the equivalence to (\ref{eq:P*SUM_g(=00005BD,Q=00005D)}). If
in addition $Q=Q^{T}$, then $PQ=PQP$ implies
\[
QP=\left(PQ\right)^{T}=\left(PQP\right)^{T}=PQP=PQ,
\]
that is $\left[P,Q\right]=0$. And also $\left[P,Q\right]=0$ implies
$PQ=PQP$ hence the equivalence to (\ref{eq:SUM_g(=00005BD,Q=00005D)}).
\end{proof}
As was pointed out after corollary \ref{cor:total rates must be uniform},
CGs that are compatible with dynamics form an equitable partition
of the graph specified by the weighted adjacency matrix $Q$. In \cite{Barrett17},
graph automorphism symmetries (permutations of vertices that commute
with the weighted adjacency matrix $Q$) were used to single out equitable
partitions with their orbits. Theorem \ref{thm:SUM(=00005BD(g),Q=00005D)=00003D0 <=00003D> PQ=00003DPQP}
confirms this, as Eq. (\ref{eq:P*SUM_g(=00005BD,Q=00005D)}) trivially
holds for all groups that satisfy $\left[D\left(g\right),Q\right]=0$
for all $g$. However, Theorem \ref{thm:SUM(=00005BD(g),Q=00005D)=00003D0 <=00003D> PQ=00003DPQP}
(together with Corollary \ref{cor:total rates must be uniform}) implies
that there is a broader set of symmetries, beyond automorphisms, that
specify equitable partitions with their orbits. These are the groups
that comply with Eq. (\ref{eq:P*SUM_g(=00005BD,Q=00005D)}) or (\ref{eq:SUM_g(=00005BD,Q=00005D)}).

\subsection{Example: Glauber-Ising Model}

The system that we study here is a 1D classical spin lattice with
periodic boundary conditions, i.e., an Ising spin chain. We will see
that compatible CGs of this system are not so obvious (and the obvious
ones are not compatible). We will overcome this difficulty by putting
to use the considerations of symmetry developed in the previous section. 

Each of the sites in the Ising chain can be in one of two states $\left\{ \pm1\right\} $.
A micro state of the lattice of length $N$ is an $N$-component binary
vector $\sigma\in\left\{ \pm1\right\} ^{N}$, and the state-space
consists of $2^{N}$ micro states $\left\{ \sigma_{i}\right\} _{i=1}^{2^{N}}$.
The internal energy of a micro state $\sigma$ is 
\[
E\left(\sigma\right)=-J\underset{x=1}{\overset{N}{\sum}}\sigma^{\left(x\right)}\sigma^{\left(x+1\right)},
\]
where $J>0$ is the local interaction energy, and $\sigma^{\left(x\right)}$
is the sign of site $x$.

The Glauber-Ising model, proposed by Glauber in \cite{Glauber63},
is a model of dynamics for an Ising spin chain that interacts thermally
with its environment. According to the model, the micro state of the
system evolves by transitions caused by single spin flips. The transition
rate depends on whether the energy increases, decreases or stays the
same
\[
r\left(\sigma\longmapsto\sigma'\right):=\begin{cases}
1-\gamma\left(T\right) & :\,E\left(\sigma'\right)>E\left(\sigma\right)\\
1 & :\,E\left(\sigma'\right)=E\left(\sigma\right)\\
1+\gamma\left(T\right) & :\,E\left(\sigma'\right)<E\left(\sigma\right),
\end{cases}
\]
where $\gamma\left(T\right)$ is a positive, temperature-dependent,
parameter. This model simply states that transitions that increase
$E$ happen at a slower rate than the ones that decrease $E$, and
this rate difference is additively modified by the temperature through
$\gamma\left(T\right)$. All three rates are in units that normalize
the middle rate to $1$. The parameter $\gamma\left(T\right)$ can
then be derived by demanding\textit{ }detailed balance condition in
equilibrium, which leads to $\gamma\left(T\right)=\tanh\left[\frac{2J}{T}\right]$
(see \cite{Glauber63} for details). With the rate function $r\left(\sigma\longmapsto\sigma'\right)$
we can in principle construct the $2^{N}\times2^{N}$ transition rate
matrix $Q$. 

In order to understand how this dynamical system can be coarse-grained,
we look at the case of $N=3$ first. Instead of writing down the matrix
$Q$ explicitly, we describe the dynamics as a random walk on the
graph depicted in fig \ref{fig:spinchain random walk}(a). 
\begin{figure}[t]
\begin{raggedright}
(a) %
\noindent\begin{minipage}[t]{1\columnwidth}%
\includegraphics[width=1\columnwidth]{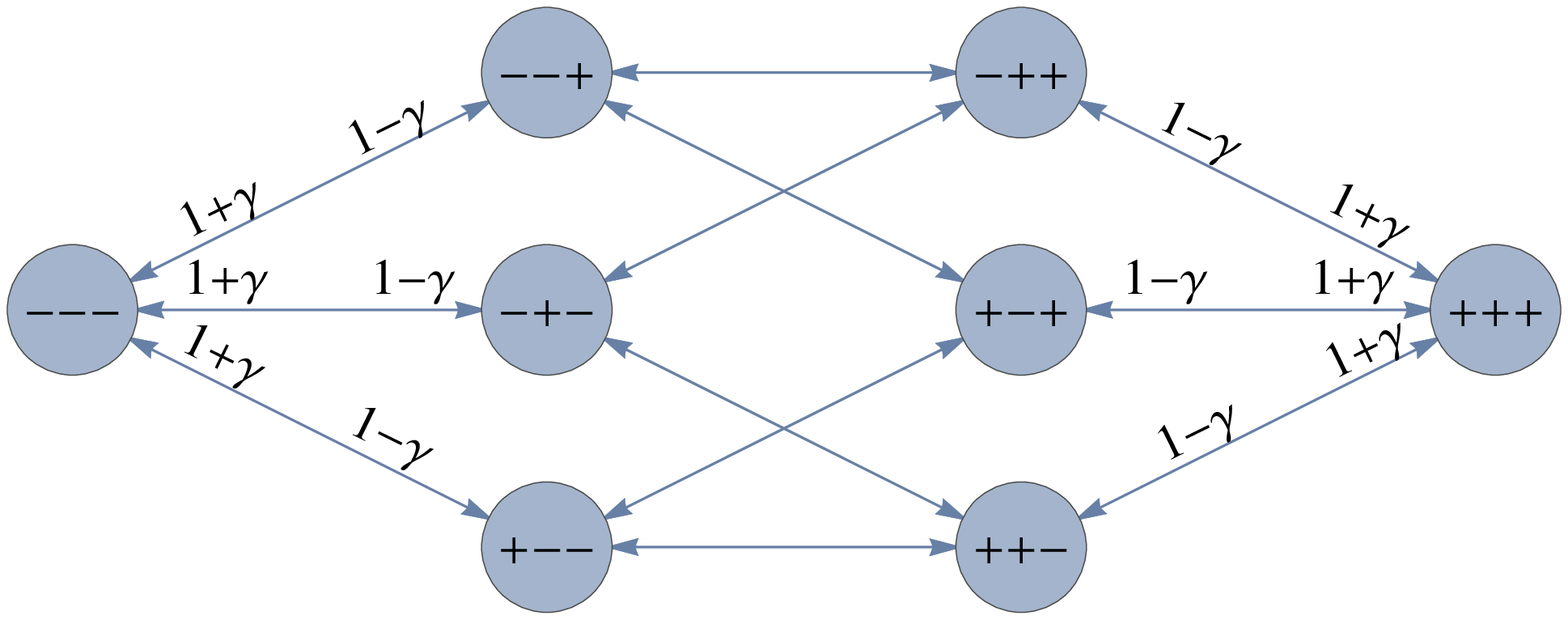}%
\end{minipage}
\par\end{raggedright}
\raggedright{}(b) %
\noindent\begin{minipage}[t]{1\columnwidth}%
\includegraphics[width=1\columnwidth]{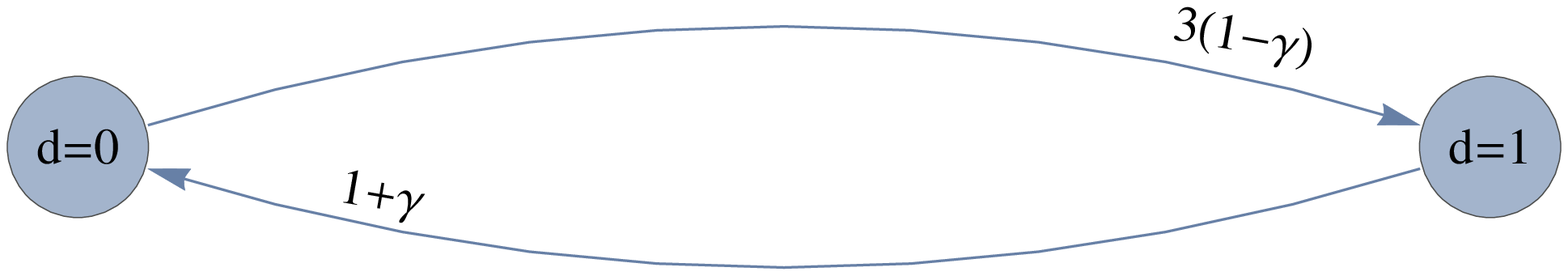}%
\end{minipage}\caption{\label{fig:spinchain random walk}(a) Random walk graph for a 3-spin
Ising chain. Glauber's transition rates are explicitly specified on
the edges only where it differs from 1. Transition rate 1 is implied
for unlabeled edges. (b) Coarse-graining of the state-space of 3-spin
Ising chain with respect to the total number of domains $d$. Effective
transition rates between states with $d=0$ and $d=1$ specified on
the edges. }
\end{figure}
 The signs $\pm$ stand for $\pm1$ and transition rates are explicitly
specified only where they differ from $1$. The total transition rate
from each of the ground states $\left(+++\right)$, $\left(---\right)$,
to the bulk of excited states (the ones in the middle) sum to $3\left(1-\gamma\right)$.
In the opposite direction, from excited to the ground, there is only
one transition for each of the excited states, and it is at the same
rate $1+\gamma$. Corollary \ref{cor:total rates must be uniform}
then implies that coarse-graining this system with respect to the
energy levels is compatible with dynamics. Instead of energy we can
count the number of domains $d$ (defined as half the number of intervals
in the chain that differ in sign from their surroundings) which is
a proxy variable for energy as seen from the relation $E=-J\left(N-4d\right)$.
In fig. \ref{fig:spinchain random walk}(b) we see the reduced state-space,
coarse-grained by blocking together micro states that have the same
energy / number of domains.

The reduced transition rate matrix 
\[
\tilde{Q}=\begin{pmatrix}-3\left(1-\gamma\right) & 1+\gamma\\
3\left(1-\gamma\right) & -1-\gamma
\end{pmatrix}
\]
generates a random walk in the state-space of the number-of-domains
variable $d\in\left\{ 0,1\right\} $. The eigenvalues of $\tilde{Q}$
are $\lambda_{0}=0$ and $\lambda_{1}=2\left(\gamma-2\right)$ correspond
to the eigenvectors 
\[
v_{0}=\frac{1}{2\left(2-\gamma\right)}\begin{pmatrix}1+\gamma\\
3\left(1-\gamma\right)
\end{pmatrix}\,\,\,\,\,\,\,\,\,\,\,\,\,\,v_{1}=\begin{pmatrix}-1/2\\
1/2
\end{pmatrix}.
\]
Since $e^{t\tilde{Q}}v_{0}=v_{0}$, the vector $v_{0}$ is the steady/equilibrium
state of the system, and its components are the probabilities of finding
the system in one of the energy levels when the system is in equilibrium.
So, for example, the probability of finding this system in the excited
state in equilibrium is 
\[
\Pr\left(E=J\right)=\frac{3\left(1-\tanh\left[\frac{2J}{T}\right]\right)}{2\left(2-\tanh\left[\frac{2J}{T}\right]\right)}=\frac{3}{3+e^{4J/T}}.
\]
This expression agrees with the standard calculation of Boltzmann's
factor and partition function. 

In addition to recovering equilibrium properties from $\tilde{Q}$,
we can also learn something about out-of-equilibrium behavior. Since
$v_{0}$ is a normalized probability vector, we can always add to
it a fraction of $v_{1}$ to get any other normalized probability
vector. So, any initial state $p_{in}$ can be written as $v_{0}+rv_{1}$,
where $r$ is a real parameter. Its time evolution is then given by
\[
p\left(t\right)=e^{t\tilde{Q}}p_{in}=v_{0}+re^{t\lambda_{1}}v_{1}.
\]
Note that $\lambda_{1}<-2$ because $\gamma=\tanh\left[\frac{2J}{T}\right]<1$,
therefore every initial state relaxes to equilibrium $v_{0}$ by exponentially
suppressing $v_{1}$. This means that the characteristic relaxation
time for this system is
\[
-\frac{1}{\lambda_{1}}=-\frac{1}{2\left(\gamma-2\right)}=\frac{1}{2}\left(\frac{1+e^{4J/T}}{3+e^{4J/T}}\right).
\]

Even though the intuitive CG with respect to the energy levels is
compatible with dynamics for $N=3$, it is not true in general (see
the case of $N=4$ bellow). In the general case we will look for a
group that complies with the compatibility condition of Theorem \ref{thm:SUM(=00005BD(g),Q=00005D)=00003D0 <=00003D> PQ=00003DPQP},
and take its orbits to be the compatible CG blocks. 

From the $N=3$ case we see that the group $\mathbb{Z}_{3}$ of lattice
translations generates orbits that coincide with columns in fig \ref{fig:spinchain random walk}(a).
If we complement this group with $\mathbb{Z}_{2}$ of global spin
flips then $\mathbb{Z}_{3}\times\mathbb{Z}_{2}$ generates 2 orbits
that coincide with the blocks of $d=0$ and $d=1$. Since these blocks
are compatible with dynamics, we conjecture that for general $N$
the orbits of $G=\mathbb{Z}_{N}\times\mathbb{Z}_{2}$ (translations
and global flips) are compatible with dynamics.

To prove that, we note that the transition rate matrix can be decomposed
as a sum of $N$ matrices $Q=\sum_{x=1}^{N}Q^{\left(x\right)}$, where
each $Q^{\left(x\right)}$ generates transitions restricted to flips
of site $x$. If $D\left(x\right)$ represents the action of lattice
translations by $x$ sites, and $D\left(x\right)D\left(y\right)=D\left(x+y\right)$,
then we get
\[
D\left(y\right)Q^{\left(x\right)}=D\left(y\right)Q^{\left(x\right)}D\left(-y\right)D\left(y\right)=Q^{\left(y+x\right)}D\left(y\right).
\]
Therefore,
\[
D\left(y\right)Q=\sum_{x\in\mathbb{Z}_{N}}Q^{\left(y+x\right)}D\left(y\right)=QD\left(y\right),
\]
that is $\left[D\left(y\right),Q\right]=0$. Each local spin flip
generator $Q^{\left(x\right)}$ also commutes with the global spin
flip action $D\left(f\right)$, therefore $\left[D\left(g\right),Q\right]=0$
for all $g\in G$. Since $G$ is a symmetry group of $Q$, it satisfies
the compatibility condition of Theorem \ref{thm:SUM(=00005BD(g),Q=00005D)=00003D0 <=00003D> PQ=00003DPQP},
and we can coarse-grain this dynamical system by blocking together
the states that belong to the same orbit of $G$. 

For $N=4$ the orbits of $G$ are
\[
\begin{array}{lc}
d=0 & \left\{ \begin{array}{lc}
\textrm{orbit}\,1 & \left\{ \begin{array}{cc}
\left(++++\right) & \left(----\right)\end{array}\right.\end{array}\right.\\
\\
d=1 & \left\{ \begin{array}{ll}
\textrm{orbit}\,2 & \left\{ \begin{array}{cl}
\left(+---\right) & \left(-+++\right)\\
\left(-+--\right) & \left(+-++\right)\\
\left(--+-\right) & \left(++-+\right)\\
\left(---+\right) & \left(+++-\right)
\end{array}\right.\\
\\
\textrm{orbit}\,3 & \left\{ \begin{array}{cc}
\left(++--\right) & \left(--++\right)\\
\left(-++-\right) & \left(+--+\right)
\end{array}\right.
\end{array}\right.\\
\\
d=2 & \left\{ \begin{array}{lc}
\textrm{orbit}\,4 & \left\{ \begin{array}{cc}
\left(+-+-\right) & \left(-+-+\right),\end{array}\right.\end{array}\right.
\end{array}
\]
so orbit 1 coincides with the lowest energy level, orbits 2 and 3
together form the first excited level, and orbit 4 coincides with
the second excited level. Transition rates between the orbits are
shown in fig. \ref{fig:spinchain random walk-2}.
\begin{figure}[t]
\raggedright{}\includegraphics[width=1\columnwidth]{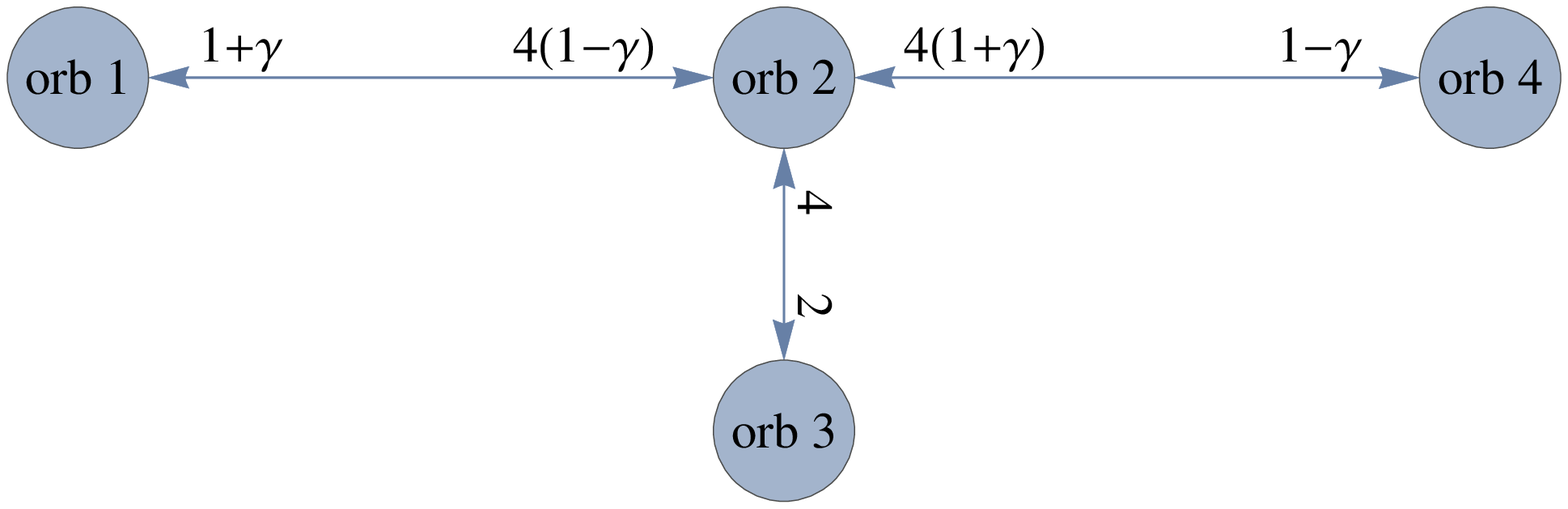}\caption{\label{fig:spinchain random walk-2}Random walk graph for a 4-spin
state-space coarse-grained with respect to orbits of translations
and global spin flips.}
\end{figure}
 Note that the rate of transitions from orbit 2 to the neighboring
energy levels are $1\pm\gamma$ but from orbit 3 it is $0$, because
no single spin flip can change the energy. In general, the total transition
rates to the neighboring energy levels are not constant over the states
in each energy level. That is why CG by energy levels is not compatible
with dynamics. For $N>3$ energy levels happen to be too coarse to
be compatible, but the orbits of $G$ are fine enough. 

It is curious to note that in the thermodynamic limit $N\longrightarrow\infty$,
each energy level consists almost entirely of states that have the
same total transition rate to the neighboring levels. The non-uniformity
of rates over the energy levels is then suppressed, and transition
rates between energy levels can be approximately defined, but this
analysis is beyond our scope here.

\subsection{\label{subsec:Partial-Subsystems-and}Partial Subsystems and Bipartitions}

The concept of a partial subsystems that we will define here is a
natural byproduct of the CG discussion. It follows from the observation
that for a bipartite system, marginalization of one of its subsystems
is a special case of CG. If so, it is natural to ask whether any CG
can be viewed as marginalization of some kind of subsystem. The answer
is yes if one is willing to stretch the meaning of subsystem. This
leads us to the definition of a \textit{partial }subsystem. In the
context of classical CG it is hardly worth the effort but the goal
here is to prepare the ground for quantum CG. The raw notion of CG
does not land naturally in quantum theory, but it easily sneaks in
as marginalization of a partial subsystem. 

Consider the state-space of a composite system $AB$ that is the cartesian
product $\mathsf{C}:=\mathsf{A}\times\mathsf{B}$, where $\mathsf{A}:=\left\{ \alpha_{i}\right\} _{i=1}^{\mathsf{\left|A\right|}}$
and $\mathsf{B}:=\left\{ \beta_{k}\right\} _{k=1}^{\mathsf{\left|B\right|}}$
are the state-spaces of individual subsystems (now both $\alpha_{i}$
and $\beta_{k}$ refer to micro states). Probabilistic states $p_{\mathsf{C}}$
live in the vector space $\mathbb{R}^{\mathsf{C}}$, and marginalization
of subsystem $A$ is given by the map $\mathcal{M}:\mathbb{R}^{\mathsf{A}\times\mathsf{B}}\longrightarrow\mathbb{R}^{\mathsf{B}}$
which accounts for our ignorance of system $A$. The map $\mathcal{M}$
is defined by the relation
\[
p_{\mathcal{\mathsf{B}}}\left(\beta\right)=\sum_{\alpha\in\mathsf{A}}p_{\mathcal{\mathsf{C}}}\left(\alpha\times\beta\right).
\]
The resemblance between this equation and Eq. (\ref{eq: p_B(b)=00003DSum_a_(p_S(a))})
is obvious. If we partition $\mathsf{C}$ into blocks $\left\{ \alpha\times\beta\right\} _{\alpha\in A}^{\mathsf{}}$
for each $\beta\in\mathsf{B}$, and slightly abuse the notation by
also referring to each block with $\beta$, then CG as defined by
Eq. (\ref{eq: p_B(b)=00003DSum_a_(p_S(a))}) marginalizes $A$, and
the action of CG matrix $M$ is identical to that of map $\mathcal{M}$. 

Marginalization is a special case of CG where all blocks are of the
same size. In general this is not the case, but if we happen to partition
a system (not necessarily composite) into blocks of equal size, we
can think about it as a composite system of two \textit{virtual} subsystems.
Consider the state-space $\mathsf{C}:=\left\{ \gamma_{ik}\right\} $
where the indices $i=1...n$ and $k=1...m$ define a partition of
$\mathsf{C}$ into $m$ blocks of $n$ elements each. We then can
imagine systems $\mathsf{A}:=\left\{ \alpha_{i}\right\} _{i=1}^{n}$
and $\mathsf{B}:=\left\{ \beta_{k}\right\} _{k=1}^{m}$ and identify
the states $\gamma_{ik}\equiv\alpha_{i}\times\beta_{k}$ so $\mathsf{C}\cong\mathsf{A}\times\mathsf{B}$.
Such subsystems are commonly known as \textit{virtual subsystems}.
Coarse-graining with respect to this partition is effectively a marginalization
of the virtual subsystem $A$.

\noindent In general we can specify any partition of $\mathsf{C}:=\left\{ \gamma_{ik}\right\} $
by assignment of indices $i,k$, where $k$ refers to the block and
$i$ to the relative position of elements inside the block. It is
convenient to order the blocks by descending block size and arrange
the elements $\left\{ \gamma_{ik}\right\} $ into what we call a \textit{bipartition
table.} 
\begin{table}[H]
\centering{}%
\begin{tabular}{|c|c|c|c|c|ccc|c|}
\cline{1-7} \cline{9-9} 
$\gamma_{1,1}$ & ... & $\gamma_{1,k}$ & ... & $\gamma_{1,w_{i}}$ & \multicolumn{1}{c|}{...} & \multicolumn{1}{c|}{$\gamma_{1,m}$} & $\shortrightarrow$ & $\alpha_{1}$\tabularnewline
\cline{1-7} \cline{9-9} 
$\vdots$ & $\vdots$ & $\vdots$ & $\vdots$ & $\vdots$ & \multicolumn{1}{c|}{$\vdots$} & \multicolumn{1}{c|}{$\vdots$} &  & $\vdots$\tabularnewline
\cline{1-7} \cline{9-9} 
$\gamma_{i,1}$ & ... & $\gamma_{i,k}$ & ... & $\gamma_{i,w_{i}}$ &  &  & $\shortrightarrow$ & $\alpha_{i}$\tabularnewline
\cline{1-5} \cline{9-9} 
$\vdots$ & $\vdots$ & $\vdots$ & $\vdots$ & $\vdots$ &  &  &  & $\vdots$\tabularnewline
\cline{1-5} \cline{9-9} 
$\gamma_{h_{k},1}$ & ... & $\gamma_{h_{k},k}$ & \multicolumn{1}{c}{} & \multicolumn{1}{c}{} &  &  & $\shortrightarrow$ & $\alpha_{h_{k}}$\tabularnewline
\cline{1-3} \cline{9-9} 
$\vdots$ & $\vdots$ & \multicolumn{1}{c}{} & \multicolumn{1}{c}{} & \multicolumn{1}{c}{} &  &  &  & $\vdots$\tabularnewline
\cline{1-2} \cline{9-9} 
$\gamma_{n,1}$ & ... & \multicolumn{1}{c}{} & \multicolumn{1}{c}{} & \multicolumn{1}{c}{} &  &  & $\shortrightarrow$ & $\alpha_{n}$\tabularnewline
\cline{1-2} \cline{9-9} 
\multicolumn{1}{c}{$\shortdownarrow$} & \multicolumn{1}{c}{} & \multicolumn{1}{c}{$\shortdownarrow$} & \multicolumn{1}{c}{} & \multicolumn{1}{c}{$\shortdownarrow$} &  & $\shortdownarrow$ & \multicolumn{1}{c}{} & \multicolumn{1}{c}{}\tabularnewline
\cline{1-7} 
$\beta_{1}$ & ... & $\beta_{k}$ & ... & $\beta_{w_{i}}$ & \multicolumn{1}{c|}{...} & \multicolumn{1}{c|}{$\beta_{m}$} & \multicolumn{1}{c}{} & \multicolumn{1}{c}{}\tabularnewline
\cline{1-7} 
\end{tabular}\caption{\textit{\label{tab:Bipartition-Table}Bipartition Table. Arrows point
toward the associated states of partial subsystems. }}
\end{table}
The columns of this table correspond to CG blocks. If all blocks are
of the same size, then the table is rectangular and the set of columns
(rows) is associated with states of the virtual subsystems $B$ ($A$),
as indicated by the arrows. When the blocks are not all of the same
size, the range of indices $i,k$ is not independent from each other.
If $k=1...m$ for a fixed $m$, then $i=1...h_{k}$ where $h_{k}$
is the size of block $k$ (height of column $k$). We can always invert
the dependence so if $i=1...n$, then $k=1...w_{i}$ (width of row
$i$). Even when the table is not rectangular, we can still associate
the columns (rows) with states of fictitious system $B$ ($A$), and
identify $\gamma_{ik}\equiv\alpha_{i}\times\beta_{k}$ as composite
states. We call such fictitious subsystems \textit{partial subsystems.
}What sets them apart from virtual subsystems\textit{ }is the fact
that certain combinations of states are not allowed. The injective
map $V:\mathsf{C}\longrightarrow\mathsf{A}\times\mathsf{B}$, which
assigns elements of $\mathsf{C}$ into the bipartition table, will
be called a \textit{partial bipartition map} (the bipartition is not
partial if $V$ is bijective).

Marginalization of a partial subsystem is essentially the same procedure
as marginalization of the non-partial subsystem. We sum the probabilities
over the rows or the columns of the bipartition table and assign them
to reduced states. The fact that some combinations of composite states
are not allowed simply means that they contribute nothing to the sums.
To make this more rigorous, consider the partial bipartition map $V:\mathsf{C}\longrightarrow\mathsf{A}\times\mathsf{B}$.
By applying $V$ on the corresponding basis of $\mathbb{R}^{\mathsf{C}}$
we get the isomorphic embedding $\mathcal{V}:\mathbb{R}^{\mathsf{C}}\longrightarrow\mathbb{R}^{\mathsf{A}\times\mathsf{B}}$.
The map $\mathcal{V}$ embeds probabilistic states of $C$ into a
subspace of probabilistic states of $AB$, spanned by the allowed
combinations of states. Marginalizing with $\mathcal{M}:\mathbb{R}^{\mathsf{A}\times\mathsf{B}}\longrightarrow\mathbb{R}^{\mathsf{B}}$
after embedding with $\mathcal{V}$ defines the marginalization of
the partial subsystem $A$ 
\[
\mathcal{M}\mathcal{V}:\mathbb{R}^{\mathsf{C}}\longrightarrow\mathbb{R}^{\mathsf{B}}.
\]
Intuitively, the map $\mathcal{V}$ completes the missing blocks of
the bipartition table to make it rectangular and assigns zero probability
to the missing states. Then $\mathcal{M}$ sums the probabilities
over the columns and assigns them to the associated states of $B$.
Thus, $\mathcal{M}\mathcal{V}$ sums the probabilities over the columns
of the bipartition table which means that $\mathcal{M}\mathcal{V}$
implements a CG of $\mathsf{C}$ according to the blocks defined by
the columns of the table.

To recap, every system admits a partial bipartition into partial subsystems.
Partial bipartition is defined by the shape of the bipartition table
and the assignment of elements into it. Columns (rows) of the bipartition
table are associated with states of partial subsystem $B$ ($A$).
We saw that marginalization of a partial subsystem is equivalent to
CG over the columns. The fact that every CG is a marginalization of
a partial subsystem is easy to see: just arrange the CG blocks into
columns of a bipartition table with arbitrary ordering inside the
blocks. Thus, CG can be equivalently defined as marginalization of
a partial subsystem. This definition has a bit of extra structure
that is not strictly necessary for classical CG. The extra structure
is in the bipartition table which assigns order to elements inside
the blocks (columns) and it is irrelevant if we simply sum over them. 

It turns out that in the quantum version of marginalization \textendash{}
the partial trace \textendash{} this ordering makes a difference.
This also explains why we could not directly export CG into quantum
theory from the basic definitions of section \ref{subsec:Basic-Definitions}.
While partition into blocks provides enough structure to specify a
CG for the classical state-space, we need the extra structure of the
bipartition table to specify a CG for the quantum state-space.

\section{Quantum Coarse-Graining}

The fundamental feature of quantum systems that sets them apart from
their classical analogues is the superposition principle \cite{Dirac81}.
Therefore, for a notion of coarse-graining to be truly ``quantum'',
we must embrace the superposition principle and allow the possibility
of reducing superpositions of micro states into superpositions of
macro states. Section \ref{subsec: Formal Definitions} is dedicated
to the formal definition of such notion. 

Although the definition of quantum CG (QCG) is quite simple, the interpretation
requires some effort. We will show that just like in the classical
case, QCG is a manifestation of restricted access to observables.
The main technical result behind it is the definition of \textit{bipartition
operators}. In section \ref{subsec:Examples:-Operationally-Motivate}
we will demonstrate this formalism in some special cases.

Section \ref{subsec:quantum Compatibility-with-Dynamics} addresses
the question of reducibility of dynamics. The problem is formulated
in terms of a compatibility condition between a QCG scheme and a generator
of dynamics. We will see that the general condition for compatibility
can be derived and presented in exactly the same form as in the classical
case. This result is then specialized to unitary quantum dynamics
by Theorem \ref{thm:=00005BH,S=00005D in span=00007BS_kl=00007D}.

In Section \ref{subsec:Coarse Graining and Symmetries} we focus on
symmetries and the associated QCGs. Symmetry considerations have been
fundamental in the development of many important ideas in physics:
from Emmy Noether's seminal work \cite{Noether18} relating conserved
quantities to the symmetries of dynamics; to the modern applications
in subjects ranging from crystallography \cite{Cornwell} to error
avoidance strategies in quantum computers \cite{Lidar13}. Many of
the applications of symmetries, including Noether's Theorem, are concerned
with their implications for dynamical processes. Therefore, the analysis
of reducibility of dynamics would not be complete without introducing
symmetry considerations. The main result that we derive in that regard
is Theorem \ref{thm:comp. of symmetrization with dynamocs}, where
we specialize the compatibility condition to QCGs by symmetrization.
This compatibility condition turns out to be a relaxed version of
symmetry of dynamics, where the commutator $\left[U\left(g\right),H\right]$
may not vanish, but it has to belong to the operator algebra spanned
by the group.

We end this section with an example of continuous-time quantum walk
on a tree. It shows that symmetries can be used to reduce the dynamics
even when they do not commute with the Hamiltonian.

\subsection{Formal Definitions\label{subsec: Formal Definitions}}

The difficulty with direct adoption of the classical notion of CG,
in the quantum setting, arises because the classical notion is ignorant
of the possibility of superpositions between the states. Consider
a finite dimensional Hilbert space as a quantized version of the classical
state-space, where micro states $\alpha_{i}$ were promoted to orthonormal
basis $\ket{\alpha_{i}}$. If we partition the micro states into blocks
$\left\{ \ket{\alpha_{i}}\right\} _{i\in b}$, it may still make sense
to say that all states $\ket{\psi_{b}}$ that belong to the span of
block $b$ are indistinguishable and should be reduced as $\ket{\psi_{b}}\longmapsto\ket b$.
However, if we look at superpositions such as $\ket{\psi}=\ket{\psi_{b}}+\ket{\psi_{b'}}$,
this CG reduction is not consistently defined. If we naively suggest
that $\ket{\psi}\longmapsto\ket b+\ket{b'}$, then we can always write
the same state differently $\ket{\psi}=e^{i\varphi}\ket{\tilde{\psi}_{b}}+\ket{\psi_{b'}}$
where $\ket{\tilde{\psi}_{b}}:=e^{-i\varphi}\ket{\psi_{b}}$, and
get a different reduced state $\ket{\psi}\longmapsto e^{i\varphi}\ket b+\ket{b'}$. 

Reduction of coherence terms between the blocks is simply undefined
by the classical CG procedure. If we insist on using the classical
notion as it is, the only sensible approach is to discard the coherence
terms altogether. That is, the reduction of $\ket{\psi}=\ket{\psi_{b}}+\ket{\psi_{b'}}$
should be $\ket{\psi}\longmapsto\ket b\bra b+\ket{b'}\bra{b'}$. Such
CG of quantum states is consistently defined, but it is not truly
quantum. 

For the truly quantum notion of CG we have to consistently account
for coherence terms between the CG blocks. In order to do that, we
will adopt a different perspective. Recall that classical CG was eventually
understood as marginalization of a (partial) subsystem. This observation
is key, because it shifts the focus from blocks and states to subsystems.
Thus, quantum coarse-graining can be introduced as quantum marginalization
of a partial subsystem. Since the notion of quantum marginalization
\textendash{} the partial trace map \textendash{} already exists,
all we have left is to identify partial subsystems in the quantum
setting.

It should be noted that mathematically equivalent definitions of the
QCG map have been presented in \cite{Faist16} and \cite{Duarte17}.
Our definition is different in that it is derived by a direct analogy
with the classical case. Furthermore, we will expand on the formal
structure of QCG by identifying \textit{bipartition operators} as
key mathematical objects and \textit{bipartition tables} as their
graphical representation. We will also provide QCG with operational
meaning. 

\subsubsection{Partial Subsystems and Bipartitions}

Consider a finite dimensional physical Hilbert space $\mathcal{H}$.
The choice of orthonormal basis $\left\{ \ket{\gamma_{ik}}\right\} $
and their arrangement into a bipartition table \ref{tab:Bipartition-Table}
constitutes a partial bipartition of $\mathcal{H}$. The auxiliary
Hilbert space $\mathcal{H}^{A}$ ($\mathcal{H}^{B}$) of the partial
subsystem $A$ ($B$) is formally defined as the span of row kets
$\left\{ \left|\alpha_{i}\right\rangle \right\} _{i=1}^{n}$ (column
kets $\left\{ \left|\beta_{k}\right\rangle \right\} _{k=1}^{m}$)
as illustrated in table \ref{tab:Bipartition-Table}. The physical
Hilbert space $\mathcal{H}$ can now be isometrically embedded into
the subspace of $\mathcal{H}^{AB}:=\mathcal{H}^{A}\otimes\mathcal{H}^{B}$
with the map $V:\left|\gamma_{i,k}\right\rangle \longmapsto\left|\alpha_{i}\right\rangle \left|\beta_{k}\right\rangle $.
For every $\left|\gamma_{i,k}\right\rangle \in\mathcal{H}$ there
is a matching pair $\left|\alpha_{i}\right\rangle \left|\beta_{k}\right\rangle \in\mathcal{H}^{AB}$,
but not vice versa. The extra pairs in $\mathcal{H}^{AB}$ that don't
have a match in $\mathcal{H}$ correspond to the missing elements
of the bipartition table that would complete it to a rectangular form.
The case where the chosen bipartition table of $\mathcal{H}$ is rectangular,
so $\mathcal{H}\cong\mathcal{H}^{AB}$, is the case where $\mathcal{H}^{A}$
and $\mathcal{H}^{B}$ were identified by \cite{Zanardi01} as \textit{virtual}
subsystems. The construction here is more general, therefore we refer
to such subsystems as \textit{partial} subsystems. 

In the following it will be useful to express the isometry $V$ in
two complementary forms 
\begin{equation}
V=\sum_{k=1}^{m}V_{k}^{A}\otimes\left|\beta_{k}\right\rangle =\sum_{i=1}^{n}\left|\alpha_{i}\right\rangle \otimes V_{i}^{B}.\label{eq:def of V}
\end{equation}
The partial isometries 
\begin{equation}
V_{k}^{A}:=\sum_{i=1}^{h_{k}}\left|\alpha_{i}\right\rangle \left\langle \gamma_{i,k}\right|\,\,\,\,\,\,\,\,\,\,\,\,\,\,\,\,\,\,V_{i}^{B}:=\sum_{k=1}^{w_{i}}\left|\beta_{k}\right\rangle \left\langle \gamma_{i,k}\right|\label{eq:def of V^A and V_B}
\end{equation}
map the individual columns (rows) of the bipartition table into $\mathcal{H}^{A}$
($\mathcal{H}^{B}$). 

\subsubsection{Quantum Coarse-Graining Map}

Once the partial subsystem $A$ is identified, QCG is defined as the
map that traces out $A$. Since the partial trace map $tr_{A}$ acts
on operators, we have to elevate the action of the isometry $V$ to
operators as well, thus defining $\mathcal{V}\left(\cdot\right):=V\left(\cdot\right)V^{\dagger}$.
Then, the composition 
\[
tr_{\left(A\right)}:=tr_{A}\circ\mathcal{V}
\]
defines the QCG map $tr_{\left(A\right)}$. Since both components
of this composition are completely positive trace preserving (CPTP)
maps, QCG map $tr_{\left(A\right)}$ reduces proper quantum states
to proper quantum states \cite{Nilsen=000026Chuang}. Operator sum
representation of $tr_{\left(A\right)}$ can be obtained by expressing
$V$ in the second form of Eq. (\ref{eq:def of V}) and applying $tr_{A}$
\begin{align*}
tr_{\left(A\right)}\left(\rho\right) & =tr_{A}\left(V\rho V^{\dagger}\right)=\sum_{i=1}^{n}V_{i}^{B}\rho\,V_{i}^{B\dagger}.
\end{align*}
Reduction with $tr_{\left(A\right)}$ maps the density matrices between
the operator spaces as
\[
tr_{\left(A\right)}:\mathcal{B}\left(\mathcal{H}\right)\longrightarrow\mathcal{B}\left(\mathcal{H}^{A}\otimes\mathcal{H}^{B}\right)\longrightarrow\mathcal{B}\left(\mathcal{H}^{B}\right),
\]
so the partial subsystem $B$ embodies the reduced, coarse-grained,
state-space. 

The choice of notation $tr_{\left(A\right)}$ for the QCG map is justified
by its action on the matrix elements in the bipartition basis $\left|\gamma_{i,k}\right\rangle $

\begin{align}
tr_{\left(A\right)}:\left|\gamma_{i,k}\right\rangle \left\langle \gamma_{j,l}\right| & \longmapsto\delta_{ij}\left|\beta_{k}\right\rangle \left\langle \beta_{l}\right|.\label{eq:Tr_(a) acting on matrix element}
\end{align}
So it traces over the indices $i$, $j$ as if they label basis elements
of a proper subsystem (the bracketed subscript $_{\left(A\right)}$,
as opposed to the unbracketed one $_{A}$, refers to the fact that
it traces over a \textit{partial} subsystem). 

As an illustration, consider the 6 dimensional Hilbert space $\mathcal{H}$
spanned by the orthonormal basis $\{\ket s\}$ for $s=1,...,6$. A
partial bipartition of $\mathcal{H}$ is chosen s.t. in the basis
$\{\ket s\}$ it is specified by the bipartition table
\begin{center}
\begin{tabular}{|c|c|c}
\hline 
$1$ & $2$ & \multicolumn{1}{c|}{$3$}\tabularnewline
\hline 
$4$ & $5$ & \tabularnewline
\cline{1-2} 
$6$ & \multicolumn{1}{c}{} & \tabularnewline
\cline{1-1} 
\end{tabular}
\par\end{center}

\noindent We will now use the notation $\ket{\gamma_{i,k}}$ to refer
to the same elements $\ket s$ by their row / column indices; for
example $\ket 4\equiv\ket{\gamma_{2,1}}$. 

An arbitrary pure state can then be written as $\left|\psi\right\rangle =\left|\psi_{1}\right\rangle +\left|\psi_{2}\right\rangle +\left|\psi_{3}\right\rangle $,
where each unnormalized state $\left|\psi_{i}\right\rangle $ is the
support of $\left|\psi\right\rangle $ on the row $i$ 
\begin{align*}
\left|\psi_{1}\right\rangle  & :=c_{11}\ket{\gamma_{1,1}}+c_{12}\ket{\gamma_{1,2}}+c_{13}\ket{\gamma_{1,3}}\\
\left|\psi_{2}\right\rangle  & :=c_{21}\ket{\gamma_{2,1}}+c_{22}\ket{\gamma_{2,2}}\\
\left|\psi_{3}\right\rangle  & :=c_{33}\ket{\gamma_{3,1}}.
\end{align*}
Applying Eq. (\ref{eq:Tr_(a) acting on matrix element}) on the element
$\left|\psi_{i}\right\rangle \left\langle \psi_{j}\right|$ we get
\[
tr_{\left(A\right)}\left(\left|\psi_{i}\right\rangle \left\langle \psi_{j}\right|\right)=\delta_{ij}\sum_{k,l}c_{ik}\overline{c_{il}}\left|\beta_{k}\right\rangle \left\langle \beta_{l}\right|.
\]
Then if we present the density matrix $\rho:=\left|\psi\right\rangle \left\langle \psi\right|$
in the bipartition basis ordered by their appearance in the bipartition
table (read from left to right and top to bottom), the action of $tr_{\left(A\right)}$
is 
\[
\begin{array}{c}
\begin{pmatrix}{\color{red}\rho_{11}} & {\color{red}\rho_{12}} & {\color{red}\rho_{13}} & \rho_{14} & \rho_{15} & \rho_{16}\\
{\color{red}\rho_{21}} & {\color{red}\rho_{22}} & {\color{red}\rho_{23}} & \rho_{24} & \rho_{25} & \rho_{26}\\
{\color{red}\rho_{31}} & {\color{red}\rho_{32}} & {\color{red}\rho_{33}} & \rho_{34} & \rho_{35} & \rho_{36}\\
\rho_{41} & \rho_{42} & \rho_{43} & {\color{green}\rho_{44}} & {\color{green}\rho_{45}} & \rho_{46}\\
\rho_{51} & \rho_{52} & \rho_{53} & {\color{green}\rho_{54}} & {\color{green}\rho_{55}} & \rho_{56}\\
\rho_{61} & \rho_{62} & \rho_{63} & \rho_{64} & \rho_{65} & {\color{blue}\rho_{66}}
\end{pmatrix}\\
\\
\downarrow tr_{\left(A\right)}\\
\\
\begin{pmatrix}{\color{red}\rho_{11}}+{\color{green}\rho_{44}}+{\color{blue}\rho_{66}} & {\color{red}\rho_{12}}+{\color{green}\rho_{45}} & {\color{red}\rho_{13}}\\
{\color{red}\rho_{21}}+{\color{green}\rho_{54}} & {\color{red}\rho_{22}}+{\color{green}\rho_{55}} & {\color{red}\rho_{23}}\\
{\color{red}\rho_{31}} & {\color{red}\rho_{32}} & {\color{red}\rho_{33}}
\end{pmatrix}
\end{array}
\]
The colored blocks (color online) of the top matrix correspond to
the elements $\left|\psi_{i}\right\rangle \left\langle \psi_{i}\right|$.
From this we learn how to ``read'' the action of QCG from the bipartition
table: 
\begin{enumerate}
\item Coherences between basis elements $\left|\gamma_{i,k}\right\rangle \left\langle \gamma_{j,l}\right|$
in different rows ($i\neq j$) of the bipartition table are discarded. 
\item For each pair of columns $k,l$ (including $k=l$), the sum of coherences
between $\left|\gamma_{i,k}\right\rangle \left\langle \gamma_{i,l}\right|$
over all rows $i$, is the new coherence term for the reduced element
$\left|\beta_{k}\right\rangle \left\langle \beta_{l}\right|$.
\end{enumerate}
The original Hilbert space can then be decomposed to sectors
\begin{equation}
\mathcal{H}=\bigoplus_{k=1}^{m}\mathcal{H}_{k},\label{eq: H=00003D bigOsum H_k}
\end{equation}
where $\mathcal{H}_{k}$ is the span of elements in column $k$ of
the bipartition table. This decomposition is analogous to the partition
of the classical state-space to blocks. The rule 2 above suggests
that a state supported on a single column $\mathcal{H}_{k}$ collapses
into a macro state $\left|\beta_{k}\right\rangle $, as in the classical
case. Similarly, all statistical mixtures of states supported on different
columns collapse into statistical mixtures of the corresponding macro
states. The quantum-classical similarities end when we consider superpositions
between the blocks. QCG attempts to reduce the coherence terms between
the blocks into a single coherence term between the corresponding
macro states, but it cannot do so perfectly. The result is potentially
diminished coherence between the macro states of the reduced state.

Although the visual representation of QCG in terms of columns and
rows of the bipartition table is appealing, its operational meaning
is not clear. In the following we will identify a set of operators
that capture the structure of the bipartition table and use them to
gain insight about QCG's operational meaning. 

\subsubsection{Bipartition Operators}

Similarly to how we obtained the operator sum representation of $tr_{\left(A\right)}$,
we can obtain another representation by using the first form of $V$
in Eq. (\ref{eq:def of V})

\begin{align*}
tr_{\left(A\right)}\left(\rho\right) & =tr_{A}\left(V\rho V^{\dagger}\right)\\
 & =tr_{A}\left[\left(\sum_{l=1}^{m}V_{l}^{A}\otimes\left|\beta_{l}\right\rangle \right)\rho\left(\sum_{k=1}^{m}V_{k}^{A\dagger}\otimes\bra{\beta_{k}}\right)\right]\\
 & =\sum_{k,l=1}^{m}tr\left[V_{k}^{A\dagger}V_{l}^{A}\rho\right]\left|\beta_{l}\right\rangle \bra{\beta_{k}}.
\end{align*}
This brings us to the definition of the bipartition operators 
\begin{align}
S_{kl} & :=V_{k}^{A\dagger}V_{l}^{A}=\sum_{i=1}^{\mathsf{\mathsf{min}}\left(h_{k},h_{l}\right)}\left|\gamma_{i,k}\right\rangle \left\langle \gamma_{i,l}\right|\label{eq:def of bipartition operators}
\end{align}
that map between columns of the bipartition table by preserving the
row index $i$ of each element (the element is eliminated if the row
is not present in the destination column). 

As a result, we obtain a different representation of the QCG map (in
\cite{Milz17} such representation of quantum channels is described
as input/output or tomographic representation)
\begin{align}
tr_{\left(A\right)}\left(\rho\right) & =\sum_{k,l}tr\left(S_{kl}\rho\right)\left|\beta_{l}\right\rangle \left\langle \beta_{k}\right|.\label{eq:tr_(A)  action with S_kl}
\end{align}
Since the bipartition operators can be read directly from the bipartition
table, from now on we will use the right hand side of Eq. (\ref{eq:def of bipartition operators})
and Eq. (\ref{eq:tr_(A)  action with S_kl}) as the defining constructs
of QCG and leave the isometry $V$ behind (the bipartition table is
of course still the underlying structure from which all of these constructs
are derived). 

In order to obtain the operational meaning of QCG, consider what observable
information is preserved in the reduced state. Formally, the information
in the reduced state $\rho_{B}:=tr_{\left(A\right)}\left(\rho\right)$
predicts, according to Born's rule, the expectation values $tr\left(O_{B}\rho_{B}\right)$
for all observables $O_{B}$ in $\mathcal{O}\left(\mathcal{H}^{B}\right)$
(the set of observables on $B$). Since Born's rule is identical to
the Hilbert-Schmidt (HS) inner product $\left\langle O_{B},\rho_{B}\right\rangle _{HS}:=tr\left(O_{B}^{\dagger}\rho_{B}\right)$,
we can lift the QCG map from states and apply it to observables 
\[
tr\left(O_{B}\rho_{B}\right)=\left\langle O_{B},tr_{\left(A\right)}\left(\rho\right)\right\rangle _{HS}=\left\langle tr_{\left(A\right)}^{\dagger}\left(O_{B}\right),\rho\right\rangle _{HS}
\]
where $tr_{\left(A\right)}^{\dagger}$ is the Hermitian adjoint of
$tr_{\left(A\right)}$ with respect to the HS inner product (the same
symbol $^{\dagger}$ for Hermitian adjoint will be used for both operators
and superoperators). The following set of observables on the original
(unreduced) system
\begin{equation}
\mathcal{O}^{B}\left(\mathcal{H}\right):=\left\{ tr_{\left(A\right)}^{\dagger}\left(O_{B}\right)\,|\,O_{B}\in\mathcal{O}\left(\mathcal{H}^{B}\right)\right\} \subset\mathcal{O}\left(\mathcal{H}\right),\label{eq: def of rest obs set}
\end{equation}
consists of all the observables whose expectation values are preserved
by QCG. 

The explicit form of $tr_{\left(A\right)}^{\dagger}$ can be derived
by rearranging the traces and sums in 
\begin{align*}
\left\langle O_{B},tr_{\left(A\right)}\left(O\right)\right\rangle _{HS} & =tr\left(O_{B}^{\dagger}\sum_{k,l}tr\left(S_{kl}O\right)\left|\beta_{l}\right\rangle \left\langle \beta_{k}\right|\right)\\
 & =tr\left(\sum_{k,l}S_{kl}tr\left(O_{B}^{\dagger}\left|\beta_{l}\right\rangle \left\langle \beta_{k}\right|\right)O\right)\\
 & =\left\langle \left(\sum_{k,l}S_{kl}\left\langle \beta_{k}\right|O_{B}^{\dagger}\left|\beta_{l}\right\rangle \right)^{\dagger},O\right\rangle _{HS}.
\end{align*}
Then, using $S_{kl}=S_{lk}^{\dagger}$ and rearranging the indices
we get 
\begin{equation}
tr_{\left(A\right)}^{\dagger}\left(O_{B}\right)=\sum_{k,l}S_{kl}\left\langle \beta_{k}\right|O_{B}\left|\beta_{l}\right\rangle .\label{eq:def of tr_dagger_(A)}
\end{equation}

It is now clear that $\mathcal{O}^{B}\left(\mathcal{H}\right)\subset\textrm{\ensuremath{\mathsf{span}}}\left\{ S_{kl}\right\} $.
Conversely, for every observable $O\in\textrm{\ensuremath{\mathsf{span}}}\left\{ S_{kl}\right\} $
we can find an $O_{B}\in\mathcal{O}\left(\mathcal{H}^{B}\right)$
s.t. $O=tr_{\left(A\right)}^{\dagger}\left(O_{B}\right)$. Therefore,
bipartition operators $S_{kl}$ span the operator subspace containing
all and only the observables preserved by QCG. Then we can interpret
the coarse-grained state $\rho_{B}$ as the state that contains all
and only the information that is accessible to observer restricted
to $\textrm{\ensuremath{\mathsf{span}}}\left\{ S_{kl}\right\} $.
QCG map can then be understood as a change-of-observer transformation.

In the familiar case of tensor product bipartition $\mathcal{H}=\mathcal{H}^{A}\otimes\mathcal{H}^{B}$,
bipartition operators take the form 
\[
S_{kl}:=I_{A}\otimes\ket{\beta_{k}}\bra{\beta_{l}}.
\]
The restricted set of observables $\textrm{\ensuremath{\mathsf{span}}}\left\{ S_{kl}\right\} =I_{A}\otimes\mathcal{B}\left(\mathcal{H}^{B}\right)$
imply that the observer can only measure system $B$. The QCG map
(\ref{eq:tr_(A)  action with S_kl}) specializes to the usual $tr_{A}$
and the reduced states $tr_{A}\left(\rho\right)$ represent what the
restricted observer can actually ``see''. In Section \ref{subsec:Examples:-Operationally-Motivate}
we will see other familiar state transformations that can be understood
as special cases of QCG.

This closes the circle with the classical picture of CG from which
we started. Classical CG was introduced as the manifestation observer's
inability to distinguish some states, which is in fact a restriction
of observational power. Now we see that both classical and quantum
notions admit the same operational interpretation: CG is the result
of restricted observational ability. 

\subsubsection{Generalization: Quantum-Classical Hybrid\label{subsec:Hybrid-Coarse-Grainings}}

With bipartition operators it is easy to extend the quantum notion
of CG to include the original classical one. Intermediate notions,
that combine both classical and quantum features, are quick to follow
(we will keep referring to them as QCG). This generalization will
allow us to associate QCG with symmetries in section \ref{subsec:Coarse Graining and Symmetries}.

The purely classical notion of CG can be imported into quantum state-space
by simply disregarding the coherence terms. Using the set $\left\{ \varPi_{k}\right\} $
of projections on sectors $\mathcal{H}=\bigoplus_{k=1}^{m}\mathcal{H}_{k}$
that specify the classical blocks, the classical CG map is defined
as
\begin{equation}
\rho\longmapsto\sum_{k}tr\left(\varPi_{k}\rho\right)\ketbra{\beta_{k}}{\beta_{k}}.\label{eq:classical CG of density matrix}
\end{equation}
One can always represent probability vectors as diagonal density matrices
and use this map to implement classical CG as defined by Eq. (\ref{eq: p_B(b)=00003DSum_a_(p_S(a))}).
Comparing Eq. (\ref{eq:classical CG of density matrix}) to the quantum
version (\ref{eq:tr_(A)  action with S_kl}) suggests that the set
of projections $\left\{ \varPi_{k}\right\} $ is the classical equivalent
of the bipartition operators. In fact, note that by definition (\ref{eq:def of bipartition operators}),
bipartition operators of the form $S_{kk}$ are projections on sectors.
If we think of bipartition operators as $k,l$ elements of some matrix,
then $S_{kk}$ are the diagonal elements. Classical CG can then be
thought of as a restriction of some QCG specified by $\left\{ S_{kl}\right\} $
to the diagonal elements $\left\{ S_{kk}\right\} $. 

This perspective leaves room for intermediate cases that arise from
restriction of the complete set $\left\{ S_{kl}\right\} $ to block
diagonal elements. It is convenient to introduce the index $q$ to
refer to the blocks of bipartition operators, s.t. $\left\{ S_{q,kl}\right\} _{kl}$
is a block diagonal set with $k,l$ running over the elements of block
$q$. The hybrid CG map is then specified by the set $\left\{ S_{q,kl}\right\} $
and it acts similarly to (\ref{eq:tr_(A)  action with S_kl}),
\begin{align}
\rho\longmapsto & \sum_{q,k,l}tr\left(S_{q,kl}\rho\right)\left|\beta_{q,l}\right\rangle \left\langle \beta_{q,k}\right|,\label{eq:Hybrid CG map}
\end{align}
with the addition of index $q$. The purely quantum case is when $q$
specifies a single block, making the index $q$ unnecessary. The purely
classical case is when each block $q$ has only one bipartition operator
\textendash{} the projection $\varPi_{q}$. The truly hybrid case
selects the super-sectors $\mathcal{H}=\bigoplus_{q}\mathcal{H}_{q}$
of the Hilbert space where each subset $\left\{ S_{q,kl}\right\} _{kl}$
of bipartition operators is supported. The map (\ref{eq:Hybrid CG map})
reduces each super-sector $\mathcal{H}_{q}$ into a distinct sector
in the reduced state space while discarding all coherence terms between
the different $\mathcal{H}_{q}$. 

We can also generalize the visual representation of QCG with bipartition
tables by allowing block diagonal arrangements of cells. For each
subset of operators $\left\{ S_{q,kl}\right\} _{kl}$ we have a block
of cells in the bipartition table, and the different blocks live on
the diagonal of the full table
\begin{center}
\begin{tabular}{ccc}
\begin{tabular}{|c|c|c}
\hline 
$\gamma_{1;1,1}$ & $\gamma_{1;1,2}$ & \multicolumn{1}{c|}{$...$}\tabularnewline
\hline 
$\gamma_{1;2,1}$ & $\ddots$ & \tabularnewline
\cline{1-2} 
$\vdots$ & \multicolumn{1}{c}{} & \tabularnewline
\cline{1-1} 
\end{tabular} &  & \tabularnewline
 & %
\begin{tabular}{|c|c|c}
\hline 
$\gamma_{2;1,1}$ & $\gamma_{2;1,2}$ & \multicolumn{1}{c|}{$...$}\tabularnewline
\hline 
$\gamma_{2;2,1}$ & $\ddots$ & \tabularnewline
\cline{1-2} 
$\vdots$ & \multicolumn{1}{c}{} & \tabularnewline
\cline{1-1} 
\end{tabular} & \tabularnewline
 &  & $\ddots$\tabularnewline
\end{tabular}
\par\end{center}

\noindent This arrangement results in the block diagonal set $\left\{ S_{q,kl}\right\} $
if we use the original construction (\ref{eq:def of bipartition operators})
of bipartition operators with such tables.

\subsection{Special Cases of the Coarse-Graining Map \label{subsec:Examples:-Operationally-Motivate}}

The general QCG map (\ref{eq:Hybrid CG map}) captures a lot of common
state manipulations \textendash{} which are not usually thought of
as CG \textendash{} as its special cases. Since the QCG map is completely
specified by the set of bipartition operators it is possible to capture
the key structure associated with such manipulations in the neat visual
form of the bipartition table. In the following we point out a few
of such state manipulations. 

For concreteness we will consider the system of two or more spin-$\frac{1}{2}$
particles
\begin{align*}
\mathcal{H}:= & \left(\mathcal{H}^{\left(\frac{1}{2}\right)}\right)^{\otimes N}\,\,\,\,\,\,\,\,\,\,\,\mathcal{H}^{\left(\frac{1}{2}\right)}=\mathsf{span\left\{ \ket{\uparrow},\ket{\downarrow}\right\} }.
\end{align*}

\paragraph{Change of basis:}

The trivial QCG that does not actually loose any information may still
change the basis in which the density matrix is presented. The change
of basis map, disguised as QCG, is specified by arranging the new
basis elements into a single row of the bipartition table. For 2 spins,
changing to the total spin basis $\ket{j,m}$ is given by the table
\begin{center}
\begin{tabular}{|c|c|c|c|}
\hline 
$1,1$ & $1,0$ & $1,-1$ & $0,0$\tabularnewline
\hline 
\end{tabular}
\par\end{center}

\noindent which specifies the bipartition operators
\[
S_{j,m;j',m'}:=\ketbra{j,m}{j',m'},
\]
where $j,m$ are used to refer to the columns of the table. The QCG
map then simply changes the basis 
\begin{align*}
\rho\longmapsto & \sum_{j,m;j',m'}tr\left(S_{j,m;j',m'}\rho\right)\ketbra{j',m'}{j,m}=\\
 & \sum_{j,m;j',m'}\bra{j',m'}\rho\ket{j,m}\ket{j',m'}\bra{j,m}.
\end{align*}
This should make clear the fact that the result of any QCG, even the
trivial one, depends on the choice of basis that go into the bipartition
table.

\paragraph{Projective Measurement:}

Projective measurements, up to the readout of the outcome, can be
thought of as purely classical CGs. Here the bipartition table has
a column-diagonal form and the columns are specified by the projections
on the outcomes. For 2 spins, the QCG resulting from measurement of
the total spin $z$ component (without reading the outcome) is specified
by the table
\begin{center}
\begin{tabular}{c|c|c}
\cline{1-1} 
\multicolumn{1}{|c|}{$\uparrow\uparrow$} & \multicolumn{1}{c}{} & \tabularnewline
\cline{1-2} 
 & $\uparrow\downarrow$ & \tabularnewline
\cline{2-2} 
 & $\downarrow\uparrow$ & \tabularnewline
\cline{2-3} 
\multicolumn{1}{c}{} &  & \multicolumn{1}{c|}{$\downarrow\downarrow$}\tabularnewline
\cline{3-3} 
\end{tabular}
\par\end{center}

\noindent There are only 3 bipartition operators defined by this table:
the projections 
\begin{align*}
S_{1,1} & =\ketbra{\uparrow\uparrow}{\uparrow\uparrow}\\
S_{0,0} & =\ketbra{\uparrow\downarrow}{\uparrow\downarrow}+\ketbra{\downarrow\uparrow}{\downarrow\uparrow}\\
S_{-1,-1} & =\ketbra{\downarrow\downarrow}{\downarrow\downarrow},
\end{align*}
where $j_{z}=1,0,-1$ are used to label the columns. The associated
QCG map 
\[
\rho\longmapsto\sum_{j_{z}=-1,0,1}tr\left(S_{j_{z},j_{z}}\rho\right)\ketbra{j_{z}}{j_{z}}
\]
results in a diagonal matrix containing the probability distribution
over the three outcomes. 

\paragraph{Tensor product structures and (virtual) subsystems: }

Illustrating bipartite tensor product structures is where the bipartition
table really simplifies the picture. The natural tensor product structure
of the Hilbert space of 2 spins $A$ and $B$ is captured by the bipartition
table
\begin{center}
\begin{tabular}{|c|c|}
\hline 
$\uparrow\uparrow$ & $\uparrow\downarrow$\tabularnewline
\hline 
$\downarrow\uparrow$ & $\downarrow\downarrow$\tabularnewline
\hline 
\end{tabular}
\par\end{center}

\noindent It is arranged such that the degrees of freedom of spin
$A$ are constant inside the rows and the degrees of freedom of spin
$B$ are constant inside the columns. This table defines the bipartition
operators $S_{kl}:=I\otimes\ketbra kl$ for $k,l=\uparrow,\downarrow$
and the associated QCG map is just the partial trace over $A$ (rotate
the table by $90^{\circ}$ to get the partial trace over $B$)
\begin{align*}
\rho\longmapsto & \sum_{k,l=\uparrow,\downarrow}tr\left(I\otimes\ketbra kl\rho\right)\ketbra lk=\\
 & \sum_{k,l=\uparrow,\downarrow}tr\left(\ketbra kltr_{A}\left(\rho\right)\right)\ketbra lk=tr_{A}\left(\rho\right).
\end{align*}

For 3 spins we can consider the first 2 spins as subsystem $A$ and
the 3rd spin as subsystem $B$. Arranging the bipartition table where
$B$'s degrees of freedom are constant inside the columns and $A$'s
inside the rows results in
\begin{center}
\begin{tabular}{|c|c|}
\hline 
$\uparrow\uparrow\uparrow$ & $\uparrow\uparrow\downarrow$\tabularnewline
\hline 
$\uparrow\downarrow\uparrow$ & $\uparrow\downarrow\downarrow$\tabularnewline
\hline 
$\downarrow\uparrow\uparrow$ & $\downarrow\uparrow\downarrow$\tabularnewline
\hline 
$\downarrow\downarrow\uparrow$ & $\downarrow\downarrow\downarrow$\tabularnewline
\hline 
\end{tabular}
\par\end{center}

\noindent which specifies a QCG map that traces out the first 2 spins.
By rearranging this table we can specify different (possibly virtual)
bipartite tensor product structures. 

For example 
\begin{center}
\begin{tabular}{|c|c|}
\hline 
$\uparrow\uparrow\uparrow$ & $\downarrow\downarrow\downarrow$\tabularnewline
\hline 
$\downarrow\uparrow\uparrow$ & $\uparrow\downarrow\downarrow$\tabularnewline
\hline 
$\uparrow\downarrow\uparrow$ & $\downarrow\uparrow\downarrow$\tabularnewline
\hline 
$\uparrow\uparrow\downarrow$ & $\downarrow\downarrow\uparrow$\tabularnewline
\hline 
\end{tabular}
\par\end{center}

\noindent specifies the natural tensor product structure of the repetition
code. The virtual subsystem associated with the columns now encodes
the logical qubit, while the virtual subsystem associated with the
rows encodes the syndrome. The 4 bipartition operators consist of
2 projections $S_{00}$, $S_{11}$ on the columns ($0$,$1$ label
the two columns), and 2 isometries $S_{01}$, $S_{10}$ between the
columns that exchange elements inside the rows (recall Eq. (\ref{eq:def of bipartition operators})
for explicit definition). A single spin flip error $X_{i}$ acts on
the top row \textendash{} the code space \textendash{} by translating
it to the $i+1$ row, so
\[
S_{kl}X_{i}\ket{\uparrow\uparrow\uparrow}=X_{i}S_{kl}\ket{\uparrow\uparrow\uparrow}
\]
\[
S_{kl}X_{i}\ket{\downarrow\downarrow\downarrow}=X_{i}S_{kl}\ket{\downarrow\downarrow\downarrow}.
\]
Therefore, for any encoding $\ket{\psi}=\alpha\ket{\uparrow\uparrow\uparrow}+\beta\ket{\downarrow\downarrow\downarrow}$
we can have a single spin flip error that will not affect the coarse
grained state
\begin{align*}
X_{i}\ketbra{\psi}{\psi}X_{i}\longmapsto & \sum_{k,l=0,1}tr\left(S_{kl}X_{i}\ketbra{\psi}{\psi}X_{i}\right)\ketbra lk=\\
 & \sum_{k,l=0,1}tr\left(S_{kl}\ketbra{\psi}{\psi}\right)\ketbra lk=\\
 & \left(\alpha\ket 0+\beta\ket 1\right)\left(\overline{\alpha}\bra 0+\overline{\beta}\bra 1\right).
\end{align*}

\noindent In this context we think of the QCG map as a decoding procedure
that traces out the syndrome degrees of freedom and produces the encoded
qubit. 

\paragraph{Reference frames and noiseless subsystems:}

For errors that arbitrarily change the reference frame (RF) there
are \textit{noiseless subsystems} where information can be encoded
in RF-independent degrees of freedom \cite{Lidar13}. Such degrees
of freedom can be associated with the reduced state that is seen by
an observer that does not have access to the RF in which the state
was prepared \cite{Bartlet07}. This reduction of state can also be
considered as QCG. 

Since RFs are completely specified by a group of transformations that
change them, the relevant structure of QCG is selected by the irreducible
representations of the group (we will elaborate on this in Section
\ref{subsec:Coarse Graining and Symmetries}). Considering a system
of three spins and a RF of direction associated with global rotations,
we get the bipartition table 
\begin{center}
\begin{tabular}{|c|cc}
\cline{1-1} 
$\frac{3}{2},+\frac{3}{2}$ & {\large{}\strut } & \tabularnewline
\cline{1-1} 
$\vdots$ & {\large{}\strut } & \tabularnewline
\cline{1-1} 
$\frac{3}{2},-\frac{3}{2}$ & {\large{}\strut } & \tabularnewline
\hline 
\multicolumn{1}{c|}{{\large{}\strut }} & \multicolumn{1}{c|}{$\frac{1}{2},+\frac{1}{2},0$} & \multicolumn{1}{c|}{$\frac{1}{2},+\frac{1}{2},1$}\tabularnewline
\cline{2-3} 
\multicolumn{1}{c|}{{\large{}\strut }} & \multicolumn{1}{c|}{$\frac{1}{2},-\frac{1}{2},0$} & \multicolumn{1}{c|}{$\frac{1}{2},-\frac{1}{2},1$}\tabularnewline
\cline{2-3} 
\end{tabular}
\par\end{center}

\noindent There are two blocks in this table corresponding to the
irreducible representations of total spin $\frac{3}{2}$ and $\frac{1}{2}$.
The block of total spin $\frac{1}{2}$ specifies the virtual tensor
product $\ket{\frac{1}{2},\pm\frac{1}{2}}\otimes\ket k$ where $k=0,1$
labels the two copies of this representation. The 5 bipartition operators
specified by this table are
\begin{align*}
S_{\frac{3}{2}} & :=\sum_{j_{z}=-\frac{3}{2},...,\frac{3}{2}}\ketbra{\frac{3}{2},j_{z}}{\frac{3}{2},j_{z}}=I^{\left(\frac{3}{2}\right)}\\
\\
S_{\frac{1}{2},kl} & :=\sum_{j_{z}=-\frac{1}{2},\frac{1}{2}}\ketbra{\frac{1}{2},j_{z},k}{\frac{1}{2},j_{z},l}=I^{\left(\frac{1}{2}\right)}\otimes\ketbra kl
\end{align*}

\noindent and, according to Schur's lemmas \cite{Cornwell}, they
span the space of all operators that commute with all global rotations
that act on $\left(\mathcal{H}^{\left(\frac{1}{2}\right)}\right)^{\otimes3}$.
Therefore, according to the operational interpretation of QCG, the
reduced state retains only the information accessible with rotationally
invariant measurements. Such restriction of measurements is what defines
the observer that has no access to the RF of direction \cite{Bartlet07}
so this QCG produces the effective state that such observer can see. 

In the context of noiseless subsystems we can say that such QCG ``traces
out'' rotationally non-invariant degrees of freedom and produces
the qubit encoded in the rotationally invariant degrees of freedom. 

\subsection{Compatibility with Dynamics\label{subsec:quantum Compatibility-with-Dynamics}}

As was discussed in the classical case, the coarse-grained state may
fail to follow a well defined dynamical rule. The dynamics in the
coarse-grained state-space may be such that it is impossible to tell,
from the initial conditions alone, where the system will go. The situation
is essentially the same as the one we see in open quantum systems
(see \cite{Breuer02} or \cite{Rivas12} for a comprehensive review).
In fact it was recently shown \cite{Duarte17} that under dimension
reducing maps, such as our QCG map, the reduced dynamics can be described
in the same way we describe the dynamics of open quantum systems.
This conclusion should also be evident from the approach to QCG we
have developed here: if QCG is a marginalization of a (partial) subsystem
then the remaining subsystem should evolve as an open quantum system.
This means that in general the evolution may not be universal, so
the dynamical map that governs the evolution is different for different
initial conditions and may not be completely positive \cite{Rivas12}.
Even when the dynamics are universal we may still loose the semigroup
structure which allows to characterize the dynamics with generators. 

How to deal with these difficulties in the context of open quantum
systems is an area of active research \cite{Breuer16} and we will
not attempt to address it here. Our situation is different in that
we have the freedom to choose the bipartition that may be compatible
with the given dynamics. Instead of asking how a fixed subsystem evolves,
we ask how to choose a (partial) subsystem so it evolves in a nice
way. This question will be now addressed in the form of compatibility
condition between QCGs and dynamics. 

The way the compatibility condition was derived in the classical case
(section \ref{subsec:classical Compatibility-with-Dynamics}) is sufficiently
general to be reproduced in the quantum setting. The classical condition
$PQ=PQP$ of Theorem \ref{thm:PQ=00003DPQP} has two components \textendash{}
the generator of dynamics $Q$, and the CG projection $P$. Since
the QCG map $tr_{\left(A\right)}$ is a superoperator that acts on
density matrices, the quantum analogues of $Q$ and $P$ must also
be superoperators. The analogue of $Q$ is the Lindblad superoperator
$\mathcal{L}$ \cite{Rivas12,Breuer02}, that generates time evolutions
of the density matrix $\rho$ as the solutions of 
\begin{equation}
\frac{d}{dt}\rho=\mathcal{L}\left(\rho\right).\label{eq: Markovian dynamics diff-eq for a quantum state}
\end{equation}
This equation is the quantum analogue of Eq. (\ref{eq:Markovian dynamics diff-eq for a classical state}). 

The QCG projection can be defined identically to its classical analogue
$P=M^{+}M$ as

\begin{equation}
\mathcal{P}:=tr_{\left(A\right)}^{+}\circ tr_{\left(A\right)},\label{eq:P:=00003Dtr_A^+  tr_A}
\end{equation}
where $tr_{\left(A\right)}^{+}$ is the Moore-Penrose pseudo inverse
of the QCG map $tr_{\left(A\right)}$. Explicit form of this pseudo
inverse is not necessary for our purposes and we will only use its
defining properties and the fact that it exists (all finite dimensional
linear operators \textendash{} including $tr_{\left(A\right)}$ \textendash{}
have one). For the sake of completeness we will present the explicit
forms of $tr_{\left(A\right)}^{+}$ and $\mathcal{P}$ after proving
Lemma \ref{lem:P projects on span=00007BS_kl=00007D} bellow. 

With these definitions Theorem \ref{thm:PQ=00003DPQP} can be reproduced
in the quantum setting by replacing $P$ with $\mathcal{P}$, $Q$
with $\mathcal{L}$, $M$ ($M^{+}$) with $tr_{\left(A\right)}$ ($tr_{\left(A\right)}^{+}$),
probability vectors with density matrices, and Eq. (\ref{eq:Markovian dynamics diff-eq for a classical state})
with Eq. (\ref{eq: Markovian dynamics diff-eq for a quantum state}).
The proof is the same because it relies on the linear algebraic properties
of the operators (which in both cases are assumed to be finite dimensional
matrices) and nothing more. The result is that
\begin{equation}
\mathcal{P}\mathcal{L}=\mathcal{P}\mathcal{L}\mathcal{P}\label{eq:PL=00003DPLP}
\end{equation}
is the quantum compatibility condition in the general form. When this
condition holds, and only then, the reduced state $\rho_{B}:=tr_{\left(A\right)}\left(\rho\right)$
evolves according to
\[
\frac{d}{dt}\rho_{B}=\tilde{\mathcal{L}}\left(\rho_{B}\right),
\]
where $\tilde{\mathcal{L}}=tr_{\left(A\right)}\circ\mathcal{L}\circ tr_{\left(A\right)}^{+}$.

The compatibility condition in its general form (\ref{eq:PL=00003DPLP})
is quite opaque. In the classical case it was corollary \ref{cor:total rates must be uniform}
that provided some insight into how to find compatible CG by looking
at transition rates. Extracting similar insight for quantum dynamics
is not as easy. We will not address this general case here but we
will specialize the generator $\mathcal{L}$ to unitary dynamics (given
by a Hamiltonian) and reformulate the condition (\ref{eq:PL=00003DPLP})
in a more transparent way. In the next subsection we will specialize
this condition further by focusing on QCGs given by a group representation.

The first step in clarifying the condition (\ref{eq:PL=00003DPLP})
is finding out the operator subspace on which $\mathcal{P}$ projects. 
\begin{lem}
\label{lem:P projects on span=00007BS_kl=00007D}Let $\left\{ S_{kl}\right\} $
be a set of bipartition operators, and let $\mathcal{P}$ be the associated
coarse-graining projection as defined by Eq. (\ref{eq:P:=00003Dtr_A^+  tr_A}).
Then $\mathcal{P}$ is an orthogonal projection on the operator subspace
$\textrm{\ensuremath{\mathsf{span}}}\left\{ S_{kl}\right\} $. 
\end{lem}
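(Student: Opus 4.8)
The plan is to verify the two defining properties of an orthogonal projection onto $\mathsf{span}\left\{ S_{kl}\right\} $, namely that $\mathcal{P}$ is idempotent and self-adjoint with respect to the Hilbert-Schmidt inner product, and that its image is exactly $\mathsf{span}\left\{ S_{kl}\right\} $. First I would record the general fact about the Moore-Penrose pseudo inverse: for any finite-dimensional linear map $T$, the composition $T^{+}\circ T$ is the orthogonal projection onto $(\ker T)^{\perp}=\mathsf{im}\,T^{\dagger}$, where $T^{\dagger}$ is the Hilbert-Schmidt adjoint. Applying this to $T=tr_{\left(A\right)}$ immediately gives that $\mathcal{P}=tr_{\left(A\right)}^{+}\circ tr_{\left(A\right)}$ is the orthogonal projection onto $\mathsf{im}\,tr_{\left(A\right)}^{\dagger}$, so the entire task reduces to identifying that image.

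Next I would compute $\mathsf{im}\,tr_{\left(A\right)}^{\dagger}$ explicitly. From Eq.~(\ref{eq:def of tr_dagger_(A)}), $tr_{\left(A\right)}^{\dagger}\left(O_{B}\right)=\sum_{k,l}S_{kl}\left\langle \beta_{k}\right|O_{B}\left|\beta_{l}\right\rangle $. As $O_{B}$ ranges over all operators on $\mathcal{H}^{B}$, the scalar coefficients $\left\langle \beta_{k}\right|O_{B}\left|\beta_{l}\right\rangle $ range over all of $\mathbb{C}^{m\times m}$ (since the $\left|\beta_{k}\right\rangle$ are an orthonormal basis of $\mathcal{H}^{B}$, the map $O_{B}\mapsto\left(\left\langle \beta_{k}\right|O_{B}\left|\beta_{l}\right\rangle\right)_{kl}$ is a bijection onto matrices). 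Hence $\mathsf{im}\,tr_{\left(A\right)}^{\dagger}=\mathsf{span}\left\{ S_{kl}\right\} $ exactly, which is the desired conclusion. This is essentially the content already sketched in the paragraph preceding the lemma, where it is observed that $\mathcal{O}^{B}\left(\mathcal{H}\right)\subset\mathsf{span}\left\{ S_{kl}\right\} $ and conversely every element of $\mathsf{span}\left\{ S_{kl}\right\} $ is of the form $tr_{\left(A\right)}^{\dagger}\left(O_{B}\right)$; I would simply make that argument precise and tie it to the pseudo-inverse identity.

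The one point requiring a little care — and the step I expect to be the main obstacle — is the \emph{orthogonality} of the projection, as opposed to merely its being an idempotent with the right image. This hinges on the inner product one uses: $\mathcal{P}$ is Hilbert-Schmidt self-adjoint precisely because $tr_{\left(A\right)}^{+}$ is taken to be the Moore-Penrose pseudo inverse with respect to the HS inner products on $\mathcal{B}(\mathcal{H})$ and $\mathcal{B}(\mathcal{H}^{B})$; with any other choice of inverse one would get an oblique projection. So I would state at the outset that $tr_{\left(A\right)}^{+}$ is the HS-pseudo inverse and that all adjoints are HS-adjoints, then invoke the standard characterization $T^{+}T=\mathrm{proj}_{(\ker T)^{\perp}}$. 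A secondary bookkeeping issue is the hybrid/block-diagonal generalization from Section~\ref{subsec:Hybrid-Coarse-Grainings}: if one wants the lemma to cover the hybrid map (\ref{eq:Hybrid CG map}), the same argument goes through verbatim with $S_{kl}$ replaced by $S_{q,kl}$, since the coefficient map $O_{B}\mapsto\left(\left\langle \beta_{q,k}\right|O_{B}\left|\beta_{q,l}\right\rangle\right)$ is onto the block-diagonal matrices; but for the purely quantum case stated in the lemma this is not needed. I would close by noting that the explicit forms of $tr_{\left(A\right)}^{+}$ and $\mathcal{P}$ promised in the text can now be written down by inverting $tr_{\left(A\right)}$ on $\mathsf{span}\left\{ S_{kl}\right\} $ using the Gram matrix $\left(\left\langle S_{kl},S_{k'l'}\right\rangle_{HS}\right)$ of the bipartition operators.
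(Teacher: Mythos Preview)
Your proposal is correct and follows essentially the same approach as the paper: both invoke the Moore--Penrose property that $T^{+}T$ is the orthogonal projection onto $(\ker T)^{\perp}$, and then identify this subspace with $\mathsf{span}\{S_{kl}\}$. The only minor difference is that you make the identification via $(\ker T)^{\perp}=\mathsf{im}\,tr_{(A)}^{\dagger}$ and the explicit adjoint formula~(\ref{eq:def of tr_dagger_(A)}), whereas the paper computes $tr_{(A)}(S_{kl})=\min(h_{k},h_{l})\,|\beta_{k}\rangle\langle\beta_{l}|$ directly and argues the equality by two inclusions.
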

\begin{proof}
The defining properties of the Moore-Penrose pseudo inverse \cite{Penrose54}
imply that the map $\mathcal{P}$ is an orthogonal projection on the
subspace orthogonal to the kernel of $tr_{\left(A\right)}$, that
is $\textsf{im}\left(\mathcal{P}\right)=\mathsf{ker}\left(tr_{\left(A\right)}\right)^{\perp}$.
Next, to see that $\mathsf{ker}\left(tr_{\left(A\right)}\right)^{\perp}=\textrm{\ensuremath{\mathsf{span}}}\left\{ S_{kl}\right\} $,
we will apply $tr_{\left(A\right)}$ on $S_{kl}$. Using the action
(\ref{eq:Tr_(a) acting on matrix element}) on the definition (\ref{eq:def of bipartition operators})
we get
\begin{align}
tr_{\left(A\right)}\left(S_{kl}\right) & =\mathsf{\mathsf{min}}\left(h_{k},h_{l}\right)\left|\beta_{k}\right\rangle \left\langle \beta_{l}\right|.\label{eq:tr_(A)(S_kl)}
\end{align}
From this we see that the image of $\textrm{\ensuremath{\mathsf{span}}}\left\{ S_{kl}\right\} $
under $tr_{\left(A\right)}$ is the whole $\mathsf{im}\left(tr_{\left(A\right)}\right)$.
The minimal subspace with such property is $\mathsf{ker}\left(tr_{\left(A\right)}\right)^{\perp}$,
therefore $\mathsf{ker}\left(tr_{\left(A\right)}\right)^{\perp}\subseteq\textrm{\ensuremath{\mathsf{span}}}\left\{ S_{kl}\right\} $.
On the other hand, every non-zero operator in $\textrm{\ensuremath{\mathsf{span}}}\left\{ S_{kl}\right\} $
does not vanish under $tr_{\left(A\right)}$, therefore $\textrm{\ensuremath{\mathsf{span}}}\left\{ S_{kl}\right\} \subseteq\mathsf{ker}\left(tr_{\left(A\right)}\right)^{\perp}$.
The two mutual inclusions then imply 
\[
\textrm{\ensuremath{\mathsf{span}}}\left\{ S_{kl}\right\} =\mathsf{ker}\left(tr_{\left(A\right)}\right)^{\perp}=\textsf{im}\left(\mathcal{P}\right).
\]
\end{proof}
Now we note that the pseudo inverse $tr_{\left(A\right)}^{+}$ is
a map from $\textsf{im}\left(tr_{\left(A\right)}\right)$ to $\mathsf{ker}\left(tr_{\left(A\right)}\right)^{\perp}$,
that is
\[
tr_{\left(A\right)}^{+}:\textsf{span}\left\{ \ketbra{\beta_{k}}{\beta_{l}}\right\} \rightarrow\textrm{\ensuremath{\mathsf{span}}}\left\{ S_{kl}\right\} .
\]
Eq. (\ref{eq:tr_(A)(S_kl)}) suggests that for the inverse property
$tr_{\left(A\right)}\circ tr_{\left(A\right)}^{+}=\mathcal{I}$ to
hold we must have $tr_{\left(A\right)}^{+}\left(\ketbra{\beta_{k}}{\beta_{l}}\right)=\mathsf{\mathsf{min}}\left(h_{k},h_{l}\right)^{-1}S_{kl}$
which defines the pseudo inverse 
\[
tr_{\left(A\right)}^{+}\left(O_{B}\right)=\sum_{kl}\frac{\bra{\beta_{k}}O_{B}\ket{\beta_{l}}}{\mathsf{\mathsf{min}}\left(h_{k},h_{l}\right)}S_{kl}.
\]
This map can be seen as a composition of $tr_{\left(A\right)}^{\dagger}$
(see Eq. (\ref{eq:def of tr_dagger_(A)})) with rescaling by $\mathsf{\mathsf{min}}\left(h_{k},h_{l}\right)$. 

The explicit form of $\mathcal{P}=tr_{\left(A\right)}^{+}\circ tr_{\left(A\right)}$
is then given by acting with $tr_{\left(A\right)}^{+}$ on Eq. (\ref{eq:tr_(A)  action with S_kl})
\begin{align*}
\mathcal{P}\left(O\right) & =\sum_{k,l}tr\left(S_{kl}O\right)tr_{\left(A\right)}^{+}\left(\left|\beta_{l}\right\rangle \left\langle \beta_{k}\right|\right)\\
 & =\sum_{k,l}\frac{tr\left(S_{kl}O\right)}{\mathsf{\mathsf{min}}\left(h_{k},h_{l}\right)}S_{lk}.
\end{align*}

It should be noted that even though the QCG $tr_{\left(A\right)}$
maps states to states (is CPTP), we cannot claim that $tr_{\left(A\right)}^{+}$
and $\mathcal{P}$ have this property in general. Nonetheless, the
QCG projection $\mathcal{P}$ is a useful formal construct that captures
the compatibility condition (\ref{eq:PL=00003DPLP}) and its properties
will be used in the proof of Theorem \ref{thm:=00005BH,S=00005D in span=00007BS_kl=00007D}. 

In the following we will use the fact that $\mathcal{P}$ is an\textit{
orthogonal} projection, as stated by Lemma \ref{lem:P projects on span=00007BS_kl=00007D},
meaning that not only $\mathcal{P}^{2}=\mathcal{P}$ but also $\mathcal{P}^{\dagger}=\mathcal{P}$
(the Hermitian adjoint is defined with respect to the HS inner product
$\left\langle \mathcal{P}\left(A\right),B\right\rangle _{HS}=\left\langle A,\mathcal{P}^{\dagger}\left(B\right)\right\rangle _{HS}$). 

Now we will assume unitary dynamics. This means that the generator
$\mathcal{L}$ is of the form $-i\left[H,\cdot\right]$, where $H$
is the Hamiltonian. The following theorem expresses the compatibility
condition (\ref{eq:PL=00003DPLP}) in terms of $H$ and $\left\{ S_{kl}\right\} $. 
\begin{thm}
\label{thm:=00005BH,S=00005D in span=00007BS_kl=00007D} Let $\mathcal{L}\left(\cdot\right):=-i\left[H,\cdot\right]$
be a generator of dynamics with Hamiltonian $H$, and let $\left\{ S_{kl}\right\} $
be bipartition operators that specify a coarse-graining. Then, the
compatibility condition (\ref{eq:PL=00003DPLP}) is equivalent to
\[
\left[H,S\right]\in\textrm{\ensuremath{\mathsf{span}}}\left\{ S_{kl}\right\} \,\,\,\,\,\forall S\in\textrm{\ensuremath{\mathsf{span}}}\left\{ S_{kl}\right\} 
\]
\end{thm}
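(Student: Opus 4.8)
The plan is to strip the superoperator identity (\ref{eq:PL=00003DPLP}) down to an invariance statement about the operator subspace $\mathcal{S}:=\mathsf{span}\left\{ S_{kl}\right\}$, using two inputs. The first is Lemma \ref{lem:P projects on span=00007BS_kl=00007D}, which says that $\mathcal{P}$ is the \emph{orthogonal} projection onto $\mathcal{S}$; this gives both $\mathcal{P}^{\dagger}=\mathcal{P}$ and $\ker\mathcal{P}=\mathcal{S}^{\perp}$, where the complement is taken in the Hilbert--Schmidt inner product. The second is that the unitary generator $\mathcal{L}=-i[H,\cdot]$ is anti-Hermitian for that inner product, $\mathcal{L}^{\dagger}=-\mathcal{L}$, which plays here the role that the symmetry hypothesis $Q=Q^{T}$ plays in Theorem \ref{thm:SUM(=00005BD(g),Q=00005D)=00003D0 <=00003D> PQ=00003DPQP}.

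The first thing I would check is the anti-Hermiticity. For any operators $A,B$, using $H^{\dagger}=H$ and cyclicity of the trace, $\langle\mathcal{L}(A),B\rangle_{HS}=\operatorname{tr}((-i[H,A])^{\dagger}B)=i\operatorname{tr}(A^{\dagger}[H,B])=-\langle A,\mathcal{L}(B)\rangle_{HS}$, so $\mathcal{L}^{\dagger}=-\mathcal{L}$. The only subtlety is tracking the conjugation of $-i$ and the sign flip in $[H,A]^{\dagger}=-[H,A^{\dagger}]$, but it is a one-line computation.

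Then I would run the reduction chain. First, $\mathcal{P}\mathcal{L}=\mathcal{P}\mathcal{L}\mathcal{P}$ is equivalent to $\mathcal{P}\mathcal{L}(\mathcal{I}-\mathcal{P})=0$; since $\mathcal{I}-\mathcal{P}$ is the orthogonal projection onto $\mathcal{S}^{\perp}$ and $\ker\mathcal{P}=\mathcal{S}^{\perp}$, this says exactly $\mathcal{L}(\mathcal{S}^{\perp})\subseteq\mathcal{S}^{\perp}$. Next, by the anti-Hermiticity of $\mathcal{L}$, invariance of $\mathcal{S}^{\perp}$ is equivalent to invariance of $\mathcal{S}$: if $\mathcal{L}(\mathcal{S})\subseteq\mathcal{S}$ then for $v\in\mathcal{S}^{\perp}$ and $w\in\mathcal{S}$ we get $\langle\mathcal{L}(v),w\rangle_{HS}=-\langle v,\mathcal{L}(w)\rangle_{HS}=0$, so $\mathcal{L}(v)\in\mathcal{S}^{\perp}$, and the converse is the same argument with $\mathcal{S}$ and $\mathcal{S}^{\perp}$ swapped. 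Finally, since $\mathcal{L}(S)=-i[H,S]$ and rescaling by $-i$ does not affect membership in the subspace $\mathcal{S}$, the condition $\mathcal{L}(\mathcal{S})\subseteq\mathcal{S}$ reads, element by element, $[H,S]\in\mathsf{span}\left\{ S_{kl}\right\}$ for all $S\in\mathsf{span}\left\{ S_{kl}\right\}$, which is the claim. Equivalently one may package the first two steps as $\mathcal{P}\mathcal{L}=\mathcal{P}\mathcal{L}\mathcal{P}\iff[\mathcal{P},\mathcal{L}]=0$ (adjoin the identity, use $\mathcal{P}^{\dagger}=\mathcal{P}$, $\mathcal{L}^{\dagger}=-\mathcal{L}$ to get $\mathcal{L}\mathcal{P}=\mathcal{P}\mathcal{L}\mathcal{P}$, compare; the converse uses $\mathcal{P}^{2}=\mathcal{P}$), which makes the parallel with the passage from (\ref{eq:P*SUM_g(=00005BD,Q=00005D)}) to (\ref{eq:SUM_g(=00005BD,Q=00005D)}) explicit.

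There is no real obstacle; the two places that need care are invoking Lemma \ref{lem:P projects on span=00007BS_kl=00007D} for the \emph{orthogonality} of $\mathcal{P}$ (this is precisely what licenses $\mathcal{P}^{\dagger}=\mathcal{P}$ and $\ker\mathcal{P}=(\operatorname{im}\mathcal{P})^{\perp}$, without which the chain collapses) and keeping the sign straight in $\mathcal{L}^{\dagger}=-\mathcal{L}$. Everything lives in finite dimensions, so adjoints, orthogonal complements, and the pseudo-inverse entering the definition of $\mathcal{P}$ all behave as expected.
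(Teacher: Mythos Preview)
Your proof is correct and follows essentially the same approach as the paper: both verify $\mathcal{L}^{\dagger}=-\mathcal{L}$, invoke Lemma \ref{lem:P projects on span=00007BS_kl=00007D} for $\mathcal{P}^{\dagger}=\mathcal{P}$, and reduce (\ref{eq:PL=00003DPLP}) to the commutation $[\mathcal{P},\mathcal{L}]=0$, which is then unpacked as invariance of $\mathsf{span}\{S_{kl}\}$ under $[H,\cdot]$. Your intermediate phrasing via invariance of $\mathcal{S}^{\perp}$ is a minor cosmetic variation on the paper's direct adjoint computation, but the ingredients and logic are the same.
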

\begin{proof}
First we note that $\mathcal{L}$ is an anti-Hermitian superoperator:
$\mathcal{L}^{\dagger}=-\text{\ensuremath{\mathcal{L}}}$. This can
be shown explicitly
\begin{align*}
\left\langle A,\mathcal{L}\left(B\right)\right\rangle _{HS} & =tr\left(A^{\dagger}\left(-i\left[H,B\right]\right)\right)\\
 & =tr\left(-iA^{\dagger}HB\right)+tr\left(iA^{\dagger}BH\right)\\
 & =tr\left(-iA^{\dagger}HB\right)+tr\left(iHA^{\dagger}B\right)\\
 & =tr\left(-\mathcal{L}\left(A\right)^{\dagger}B\right)=\left\langle -\mathcal{L}\left(A\right),B\right\rangle _{HS}.
\end{align*}
By taking Hermitian adjoint on both sides of (\ref{eq:PL=00003DPLP}),
and using the fact that $\mathcal{P}^{\dagger}=\mathcal{P}$, we get
\[
-\mathcal{L}\mathcal{P}=-\mathcal{P}\mathcal{L}\mathcal{P}=-\mathcal{P}\mathcal{L}
\]
The compatibility condition is then equivalent to 
\[
\mathcal{L}\mathcal{P}=\mathcal{P}\mathcal{L}.
\]
Lemma \ref{lem:P projects on span=00007BS_kl=00007D} implies that
for any $S\in\textrm{\ensuremath{\mathsf{span}}}\left\{ S_{kl}\right\} $
we have $\mathcal{P}\left(S\right)=S$ and $\mathcal{P}\left(O\right)\in\textrm{\ensuremath{\mathsf{span}}}\left\{ S_{kl}\right\} $
for any $O$. Therefore,

\[
\left[H,S\right]=i\mathcal{L}\left(S\right)=i\mathcal{L}\mathcal{P}\left(S\right)=i\mathcal{P}\mathcal{L}\left(S\right)\in\textrm{\ensuremath{\mathsf{span}}}\left\{ S_{kl}\right\} .
\]
For the opposite direction we assume that $i\mathcal{L}\left(S\right)=\left[H,S\right]\in\textrm{\ensuremath{\mathsf{span}}}\left\{ S_{kl}\right\} $
for any $S\in\textrm{\ensuremath{\mathsf{span}}}\left\{ S_{kl}\right\} $.
Since $\mathcal{P}$ is an orthogonal projection on $\textrm{\ensuremath{\mathsf{span}}}\left\{ S_{kl}\right\} $,
for any $O$ we have $\mathcal{L}\mathcal{P}\left(O\right)\in\textrm{\ensuremath{\mathsf{span}}}\left\{ S_{kl}\right\} $,
which implies $\mathcal{L}\mathcal{P}=\mathcal{P}\mathcal{L}\mathcal{P}$. 
\end{proof}

\subsection{Coarse-Graining and Symmetries \label{subsec:Coarse Graining and Symmetries}}

As was discussed in the classical case, symmetrizing the states can
also be considered as CG. We will now reproduce this argument for
the quantum case and utilize it to address the question of reducibility
of dynamics. 

Our construction relies on structures selected by irreducible representations
(irrep) of the group and the associated operator algebras. Developments
in fault-tolerant quantum computation \cite{Knill00,Zanardi00,Kempe01},\textit{
}the study of quantum reference frames and the emergence of superselection
rules \cite{Bartlet07,Kitaev04}, and more recently, quantification
of the notion of asymmetry \cite{Marvian13,Marvian14}, have all contributed
to the establishment of the algebraic framework that we will use here.

We begin by recalling Hilbert space decompositions induced by representations
of groups \cite{Cornwell}. Given a finite or a compact Lie group
$G$, with the unitary representation $U\left(G\right)$ on the Hilbert
space $\mathcal{H}$, there is a decomposition
\begin{equation}
\mathcal{H}=\bigoplus_{q,n}\mathcal{M}_{q,n}\cong\bigoplus_{q}\mathcal{M}_{q}\otimes\mathcal{N}_{q}.\label{eq:H=00003DbigOsum_q(M_q oTimes N_q)}
\end{equation}
The sectors $\mathcal{M}_{q,n}$ carry irreps of the group, the index
$q$ runs over the inequivalent irreps, and $n$ labels the different
occurrences of the same irrep. The isomorphism on the right follows
by ``collecting'' all the equivalent irreps into a tensor product
of the virtual subsystems $\mathcal{M}_{q}$ (the irrep space), and
$\mathcal{N}_{q}$ (the multiplicity space). Then, the group action
can be expressed in the form
\begin{equation}
U\left(g\right)=\bigoplus_{q}U_{\mathcal{M}_{q}}\left(g\right)\otimes I_{\mathcal{N}_{q}}\label{eq:U(g)=00003D bigOsum( U_q otimes I)}
\end{equation}
where $U_{\mathcal{M}_{q}}\left(g\right)$ are irreducible unitary
representations of the group action. This explicitly shows that the
group acts by transforming all $\mathcal{M}_{q}$ independently according
to the irrep $q$, while leaving all $\mathcal{N}_{q}$ unaffected.

The structure (\ref{eq:H=00003DbigOsum_q(M_q oTimes N_q)}) selected
by the group (from here on, by ``group'' we refer to the group of
unitary operators acting on the Hilbert space, not the abstract representationless
group) can now be used to implement a QCG. For an isolated sector
$q$, tracing over the virtual subsystem $\mathcal{M}_{q}$ can be
seen as a QCG given by the bipartition operators 
\begin{equation}
S_{q,kl}:=I_{\mathcal{M}_{q}}\otimes\ketbra{q,k}{q,l},\label{eq:S_q,kl  for symmetries}
\end{equation}
where $\ket{q,k}$ are some arbitrary orthonormal basis of $\mathcal{N}_{q}$.
The combined set (for all $q$) of bipartition operators $\left\{ S_{q,kl}\right\} $
specifies a hybrid notion of QCG as defined by Eq. (\ref{eq:Hybrid CG map}). 

Such QCG will be called \textit{coarse-graining by symmetrization}
because it eliminates all information in the asymmetric degrees of
freedom. In order to see that explicitly, consider the commutant algebra
of the group, defined by

\[
U\left(G\right)':=\left\{ B\in\mathcal{B}\left(\mathcal{H}\right)\,|\,\left[B,U\left(g\right)\right]=0,\,\,\forall g\in G\right\} .
\]
It is an immediate consequence of Schur's lemmas, and the group action
(\ref{eq:U(g)=00003D bigOsum( U_q otimes I)}), that $U\left(G\right)'$
consists of all operators of the form
\begin{equation}
B=\bigoplus_{q}I_{\mathcal{M}_{q}}\otimes B_{\mathcal{N}_{q}}.\label{eq:commutant element of U(G)}
\end{equation}
Compare it to Eq. (\ref{eq:S_q,kl  for symmetries}), from which follows
$U\left(G\right)'=\mathsf{span}\left\{ S_{q,kl}\right\} $. Since
the loss of information under QCG is captured by orthogonal projection
on $\mathsf{span}\left\{ S_{q,kl}\right\} $, it then follows that
the information that is eliminated in this case resides in the degrees
of freedom that are not invariant under the action of the group. 

So far we have established that unitary representations of groups
can be used to specify a QCG scheme. The question remaining is which
groups are useful for the reduction of dynamics. Historically, the
groups that are considered in the study of dynamical processes are
the ones that commute with the dynamics. In the case of unitary time
evolutions, these are the groups that commute with the Hamiltonian
$\left[U\left(g\right),H\right]=0$. In this case $H\in U\left(G\right)'$
so it can be expressed in the form (\ref{eq:commutant element of U(G)})
\begin{equation}
H=\bigoplus_{q}I_{\mathcal{M}_{q}}\otimes H_{\mathcal{N}_{q}}.\label{eq:symmetric H decomposition}
\end{equation}
Dynamics generated by such Hamiltonians keeps the irrep spaces $\mathcal{M}_{q}$
stationary, while evolving the multiplicity spaces $\mathcal{N}_{q}$
independently in each sector. Therefore, the degrees of freedom associated
with the irrep spaces $\mathcal{M}_{q}$ can be safely ignored when
considering time evolutions. From this we conclude that QCG by symmetrization
with the symmetry group of the Hamiltonian is compatible with dynamics.
(This can be shown rigorously by invoking the compatibility condition
of Theorem \ref{thm:=00005BH,S=00005D in span=00007BS_kl=00007D}
and using the fact that $H\in U\left(G\right)'=\mathsf{span}\left\{ S_{q,kl}\right\} $). 

This however, does not mean that symmetries of the Hamiltonian are
the only groups that are useful for the reduction of dynamics. The
appropriate generalization of symmetries of the Hamiltonian, capturing
all groups that can be used to reduce the dynamics, is given in the
following theorem.
\begin{thm}
\label{thm:comp. of symmetrization with dynamocs}Let $G$ be a finite
or a compact Lie group with unitary representation $U\left(G\right)$
on $\mathcal{H}$. Then, coarse-graining by symmetrization with $U\left(G\right)$
is compatible with dynamics generated by the Hamiltonian $H$ if and
only if 
\begin{equation}
\left[U\left(g\right),H\right]\in U\left(G\right)''\,\,\,\,\,\,\,\,\,\,\,\,\,\,\forall g\in G,\label{eq: group action compatibility condition}
\end{equation}
where $U\left(G\right)''$ is the commutant of $U\left(G\right)'$. 
\end{thm}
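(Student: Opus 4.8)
The plan is to reduce the statement to Theorem~\ref{thm:=00005BH,S=00005D in span=00007BS_kl=00007D} together with two elementary facts about commutants. First, I would note that coarse-graining by symmetrization is the hybrid QCG specified by the bipartition operators $S_{q,kl}:=I_{\mathcal{M}_{q}}\otimes\ketbra{q,k}{q,l}$ of Eq.~(\ref{eq:S_q,kl  for symmetries}), and that it was shown above that $\mathsf{span}\left\{S_{q,kl}\right\}=U\left(G\right)'$. Hence Theorem~\ref{thm:=00005BH,S=00005D in span=00007BS_kl=00007D} applies (its proof, like that of Lemma~\ref{lem:P projects on span=00007BS_kl=00007D}, uses only that $\mathcal{P}$ is the orthogonal projection onto $\mathsf{span}\left\{S_{q,kl}\right\}$, which is unaffected by the block-diagonal shape of the bipartition table), and it tells us that the compatibility condition~(\ref{eq:PL=00003DPLP}) is equivalent to $\left[H,B\right]\in U\left(G\right)'$ for all $B\in U\left(G\right)'$. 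So the whole theorem comes down to showing that $\mathrm{ad}_{H}:=\left[H,\cdot\,\right]$ stabilizes $U\left(G\right)'$ if and only if $\left[U\left(g\right),H\right]\in U\left(G\right)''$ for every $g\in G$.

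Next I would establish a duality: $\mathrm{ad}_{H}\left(U\left(G\right)'\right)\subseteq U\left(G\right)'$ if and only if $\mathrm{ad}_{H}\left(U\left(G\right)''\right)\subseteq U\left(G\right)''$. The tool is the Jacobi identity. For $B'\in U\left(G\right)'$ and $B''\in U\left(G\right)''$ one has $\left[B',B''\right]=0$, so Jacobi gives $\left[B',\left[H,B''\right]\right]=\left[B'',\left[H,B'\right]\right]$. If $\mathrm{ad}_{H}\left(U\left(G\right)'\right)\subseteq U\left(G\right)'$, then $\left[H,B'\right]\in U\left(G\right)'$ commutes with $B''$, so the right-hand side vanishes; since this holds for every $B'\in U\left(G\right)'$, it follows that $\left[H,B''\right]\in\left(U\left(G\right)'\right)'=U\left(G\right)''$. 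The converse is the identical computation with the roles of $U\left(G\right)'$ and $U\left(G\right)''$ interchanged, using the standard fact $U\left(G\right)'''=U\left(G\right)'$.

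Finally I would pass from all of $U\left(G\right)''$ to the generating unitaries. Since $\left\{U\left(g\right)\,:\,g\in G\right\}$ is closed under products ($U\left(g\right)U\left(h\right)=U\left(gh\right)$) and adjoints ($U\left(g\right)^{\dagger}=U\left(g^{-1}\right)$) and contains $I=U\left(e\right)$, it spans a unital $*$-subalgebra of $\mathcal{B}\left(\mathcal{H}\right)$, which in finite dimensions equals its own double commutant; hence $U\left(G\right)''=\mathsf{span}\left\{U\left(g\right)\,:\,g\in G\right\}$. Consequently $\left[H,B''\right]$ is a linear combination of the commutators $\left[H,U\left(g\right)\right]$, so ``$\left[H,B''\right]\in U\left(G\right)''$ for all $B''\in U\left(G\right)''$'' is equivalent to ``$\left[H,U\left(g\right)\right]\in U\left(G\right)''$ for all $g\in G$'', and the sign in $\left[U\left(g\right),H\right]=-\left[H,U\left(g\right)\right]$ is immaterial. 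Chaining the three reductions gives the asserted equivalence.

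I do not anticipate a genuine obstacle here: the argument is finite-dimensional linear algebra together with standard commutant and Schur bookkeeping. The only points needing a little care are (i) checking that Theorem~\ref{thm:=00005BH,S=00005D in span=00007BS_kl=00007D} and Lemma~\ref{lem:P projects on span=00007BS_kl=00007D} carry over verbatim to hybrid (block-diagonal) bipartition operators, and (ii) reading $\mathsf{span}\left\{U\left(g\right)\right\}$ as a finite-dimensional (hence closed) subalgebra, so that the bicommutant theorem applies for a compact Lie group $G$ without any weak-closure subtlety.
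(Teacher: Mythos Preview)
Your proposal is correct and shares the paper's overall strategy: reduce to Theorem~\ref{thm:=00005BH,S=00005D in span=00007BS_kl=00007D} via $\mathsf{span}\{S_{q,kl}\}=U(G)'$, then use the Jacobi identity to trade a commutator with $B\in U(G)'$ for one with $U(g)$. The execution of the last step differs, however. The paper proceeds directly: ``$[H,B]\in U(G)'$ for all $B\in U(G)'$'' is, by the \emph{definition} of $U(G)'$, the same as ``$[U(g),[H,B]]=0$ for all $g$ and all such $B$''; one Jacobi rearrangement (using $[U(g),B]=0$) turns this into ``$[[U(g),H],B]=0$ for all $B\in U(G)'$'', i.e.\ $[U(g),H]\in U(G)''$. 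You instead prove an intermediate duality $\mathrm{ad}_H(U(G)')\subseteq U(G)'\Leftrightarrow \mathrm{ad}_H(U(G)'')\subseteq U(G)''$ and then invoke the finite-dimensional bicommutant theorem to descend from all of $U(G)''$ to the generators $U(g)$. Both routes are valid; the paper's is shorter because it never leaves the generators, so neither the duality lemma nor the bicommutant identification $U(G)''=\mathsf{span}\{U(g)\}$ is needed. Your version buys a clean conceptual statement (the $\mathrm{ad}_H$-invariance passes between an algebra and its commutant) at the cost of an extra tool.
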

\begin{proof}
Using the fact that the bipartition operators $\left\{ S_{q,kl}\right\} $
of QCG by symmetrization span $U\left(G\right)'$, we can express
the compatibility condition of Theorem \ref{thm:=00005BH,S=00005D in span=00007BS_kl=00007D}
as

\[
\left[H,B\right]\in U\left(G\right)'\,\,\,\,\,\,\,\,\,\,\,\,\,\,\,\,\,\,\,\,\,\,\,\forall B\in U\left(G\right)'.
\]
By definition of $U\left(G\right)'$, this is equivalent to 
\[
\left[U\left(g\right),\left[H,B\right]\right]=0\,\,\,\,\,\,\,\,\,\,\,\,\,\,\,\,\,\,\,\,\,\,\,\forall B\in U\left(G\right)',\,\forall g\in G.
\]
Since $\left[U\left(g\right),B\right]=0$, we can rearrange the Lie
bracket 
\[
\left[\left[U\left(g\right),H\right],B\right]=0\,\,\,\,\,\,\,\,\,\,\,\,\,\,\,\,\,\,\,\,\,\,\,\forall B\in U\left(G\right)',\,\forall g\in G.
\]
But this means that for all $g$, $\left[U\left(g\right),H\right]$
must be in the commutant of $U\left(G\right)'$, so
\[
\left[U\left(g\right),H\right]\in U\left(G\right)''\,\,\,\,\,\,\,\,\,\,\,\,\,\,\forall g\in G.
\]
Since it is equivalent to the condition of Theorem \ref{thm:=00005BH,S=00005D in span=00007BS_kl=00007D},
which is necessary and sufficient, it is also necessary and sufficient.
\end{proof}
The commutant $U\left(G\right)''$ of the algebra $U\left(G\right)'$
consists of all operators of the form \cite{Knill00}
\[
A=\bigoplus_{q}A_{\mathcal{M}_{q}}\otimes I_{\mathcal{N}_{q}}.
\]
Since all $U\left(g\right)$ are of this form, that is $U\left(g\right)\in U\left(G\right)''$,
condition (\ref{eq: group action compatibility condition}) implies
that groups such that $H\in U\left(G\right)''$ are compatible. Symmetries
of the Hamiltonian, for which $H\in U\left(G\right)'$, trivially
comply with the condition (\ref{eq: group action compatibility condition})
because $0\in U\left(G\right)''$. In general, the compatibility condition
(\ref{eq: group action compatibility condition}) implies a very specific
form for the Hamiltonian.
\begin{prop}
\label{prop:form of compatible Hamiltonian}Any operator $H$ that
complies with the condition (\ref{eq: group action compatibility condition}),
is of the form
\begin{equation}
H=A+B=\bigoplus_{q}\left(A_{\mathcal{M}_{q}}\otimes I_{\mathcal{N}_{q}}+I_{\mathcal{M}_{q}}\otimes B_{\mathcal{N}_{q}}\right),\label{eq:compatible H decomposition}
\end{equation}
where $A\in U\left(G\right)''$ and $B\in U\left(G\right)'$. 
\end{prop}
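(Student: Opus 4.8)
The plan is to reduce the hypothesis to the ``normalizer'' form already exploited in the proof of Theorem~\ref{thm:comp. of symmetrization with dynamocs}, and then read off the decomposition sector by sector from (\ref{eq:H=00003DbigOsum_q(M_q oTimes N_q)}). Recall that $U(G)'=\bigoplus_q I_{\mathcal{M}_q}\otimes\mathcal{B}(\mathcal{N}_q)$ (cf. Eq.~(\ref{eq:commutant element of U(G)})) and $U(G)''=\bigoplus_q\mathcal{B}(\mathcal{M}_q)\otimes I_{\mathcal{N}_q}$, so that the target is precisely $H=\bigoplus_q\bigl(A_{\mathcal{M}_q}\otimes I_{\mathcal{N}_q}+I_{\mathcal{M}_q}\otimes B_{\mathcal{N}_q}\bigr)$. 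Since $[U(g),B]=0$ for every $B\in U(G)'$ yields $[[U(g),H],B]=[U(g),[H,B]]$, the hypothesis (\ref{eq: group action compatibility condition}) is equivalent to $[H,B]\in U(G)'$ for all $B\in U(G)'$, exactly as in the proof of Theorem~\ref{thm:comp. of symmetrization with dynamocs}.

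Next I would show that $H$ is block diagonal with respect to $\mathcal{H}=\bigoplus_q\mathcal{M}_q\otimes\mathcal{N}_q$. Write $H$ as a matrix of blocks $H_{qq'}\colon\mathcal{M}_{q'}\otimes\mathcal{N}_{q'}\to\mathcal{M}_q\otimes\mathcal{N}_q$. For $q\neq q'$ choose $B\in U(G)'$ with $B|_{q'}=I_{\mathcal{M}_{q'}}\otimes I_{\mathcal{N}_{q'}}$, $B|_{q}=0$ (and arbitrary elsewhere); then the $(q,q')$ block of $[H,B]$ equals $H_{qq'}$. Because the irreps $\mathcal{M}_q$ are pairwise inequivalent, Schur's lemma forces $U(G)'$ to have vanishing off-diagonal blocks, so $H_{qq'}=0$. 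Hence $H=\bigoplus_q H_q$ and the condition reduces, within each sector, to $[H_q,\,I_{\mathcal{M}_q}\otimes C]\in I_{\mathcal{M}_q}\otimes\mathcal{B}(\mathcal{N}_q)$ for all $C\in\mathcal{B}(\mathcal{N}_q)$.

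The heart of the argument is then a tensor-product lemma. Fix a sector and drop the subscript $q$; pick matrix units $\{e_{ij}\}$ of $\mathcal{B}(\mathcal{M})$ and write $H=\sum_{ij}e_{ij}\otimes H_{ij}$ with $H_{ij}\in\mathcal{B}(\mathcal{N})$, so that $[H,\,I_{\mathcal{M}}\otimes C]=\sum_{ij}e_{ij}\otimes[H_{ij},C]$. Membership of this commutator in $I_{\mathcal{M}}\otimes\mathcal{B}(\mathcal{N})$ for every $C$ says (i) the off-diagonal blocks vanish, i.e. $[H_{ij},C]=0$ for all $C$ when $i\neq j$, hence $H_{ij}=\lambda_{ij}I_{\mathcal{N}}$ (scalars are the only operators commuting with the full matrix algebra); and (ii) the diagonal blocks are mutually equal, i.e. $[H_{ii}-H_{11},C]=0$ for all $i$ and $C$, hence $H_{ii}=H_{11}+\mu_i I_{\mathcal{N}}$ with $\mu_1=0$. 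Collecting terms, $H=\bigl(\sum_{i\neq j}\lambda_{ij}e_{ij}+\sum_i\mu_i e_{ii}\bigr)\otimes I_{\mathcal{N}}+I_{\mathcal{M}}\otimes H_{11}$, which has the required shape; reassembling over $q$ and setting $A:=\bigoplus_q A_{\mathcal{M}_q}\otimes I_{\mathcal{N}_q}\in U(G)''$ and $B:=\bigoplus_q I_{\mathcal{M}_q}\otimes B_{\mathcal{N}_q}\in U(G)'$ gives $H=A+B$.

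There is no genuine obstacle here; the one place to be careful is step~(ii): membership in $I_{\mathcal{M}}\otimes\mathcal{B}(\mathcal{N})$ is \emph{not} merely that each diagonal block commutes with $\mathcal{B}(\mathcal{N})$, but that the diagonal blocks agree up to a common $\mathcal{N}$-operator, and it is exactly this that produces the summand $I_{\mathcal{M}}\otimes B_{\mathcal{N}}$. A secondary point is the appeal to Schur's lemma to obtain block diagonality from the inequivalence of the irrep sectors; everything else is bookkeeping with matrix units.
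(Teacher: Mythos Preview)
Your argument is correct but proceeds quite differently from the paper. The paper's proof is a one-line group-averaging trick: from $U(g)HU(g)^{\dagger}-H=A_{g}\in U(G)''$ it integrates over $G$ with the invariant measure to write $H=-\int_{G}A_{g}\,d\mu(g)+\int_{G}U(g)HU(g)^{\dagger}\,d\mu(g)$, identifying the first integral as $A\in U(G)''$ and the second (the $G$-twirl of $H$) as $B\in U(G)'$. Your route instead converts the hypothesis to the normalizer condition $[H,U(G)']\subseteq U(G)'$ and then performs a direct structural analysis sector by sector with matrix units. The averaging proof is shorter and yields an explicit formula for $B$ as the symmetrized Hamiltonian; your approach avoids any appeal to Haar measure and is really a statement about von Neumann algebras (any $H$ normalizing a type~I factor decomposition must split as $A+B$), so it would go through verbatim whenever one has the block structure $\bigoplus_{q}\mathcal{B}(\mathcal{M}_{q})\otimes I\oplus I\otimes\mathcal{B}(\mathcal{N}_{q})$, even without an underlying group.
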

\begin{proof}
Condition (\ref{eq: group action compatibility condition}) implies
that for every $g\in G$ there is an $A_{g}\in U\left(G\right)''$
s.t.
\[
U\left(g\right)HU\left(g\right)^{\dagger}-H=A_{g}.
\]
Rearranging the terms and integrating over $G$ (summing for finite
groups) with an invariant measure $d\mu\left(g\right)$ we get
\[
H=\underset{A}{\underbrace{-\int_{G}d\mu\left(g\right)A_{g}}}+\underset{B}{\underbrace{\int_{G}d\mu\left(g\right)U\left(g\right)HU\left(g\right)^{\dagger}}}.
\]
We have $A\in U\left(G\right)''$ by definition of $A_{g}$, and $B\in U\left(G\right)'$
because of the invariance of the measure $d\mu\left(g\right)=d\mu\left(g'\right)$
(or Rearrangement Theorem for finite groups). 
\end{proof}
It is now easy to see why groups that comply with condition (\ref{eq: group action compatibility condition})
lead to compatible QCG by symmetrization. The form (\ref{eq:compatible H decomposition})
implies that the subsystems $\mathcal{M}_{q}$ and $\mathcal{N}_{q}$
do not interact with each other. The explicit form of the time evolution
operator is 

\[
U_{H}\left(t\right)=e^{-itH}=\bigoplus_{q}e^{-itA_{\mathcal{M}_{q}}}\otimes e^{-itB_{\mathcal{N}_{q}}},
\]
so each part of the virtual composite system $\mathcal{M}_{q}\otimes\mathcal{N}_{q}$
evolves independently from the other. Therefore, we can generate time
evolutions in $\mathcal{N}_{q}$ without having to know the state
of $\mathcal{M}_{q}$ (and vise versa), and that is the defining property
needed for the reduction of dynamics. 

Symmetries of the Hamiltonian ($\left[U\left(g\right),H\right]=0$)
are too restrictive for the purposes of reduction of dynamics. They
require that, in addition to subsystem $\mathcal{N}_{q}$ evolving
independently, subsystem $\mathcal{M}_{q}$ must be stationary, which
is not necessary. Relaxing the condition to (\ref{eq: group action compatibility condition}),
and letting $\mathcal{M}_{q}$ evolve, leads to a broader set of groups,
beyond symmetries of the Hamiltonian. Thus, Theorem \ref{thm:comp. of symmetrization with dynamocs}
provides us with more possibilities to confine the dynamical evolutions
to smaller state-spaces. 

For practical applications it is beneficial to express the compatibility
condition (\ref{eq: group action compatibility condition}) in terms
of the generators of the group. Assuming $\left\{ L_{\alpha}\right\} $
are the generators of $U\left(G\right)$, and using the group action
near the identity $U\left(\epsilon_{\alpha}\right)=I-\epsilon_{\alpha}iL_{\alpha}$
(for finite groups we can use the generators directly), the compatibility
condition becomes
\[
\left[L_{\alpha},H\right]\in U\left(G\right)''\,\,\,\,\,\,\,\,\,\,\,\,\,\,\,\,\,\forall L_{\alpha}.
\]
Furthermore, the operator algebra $\mathsf{Alg}\left\{ L_{\alpha}\right\} $,
of all polynomials in $\left\{ L_{\alpha}\right\} $, is a subalgebra
of $U\left(G\right)''$ (by definition of $U\left(G\right)'$, every
$L_{\alpha}$ must commute with everything in $U\left(G\right)'$).
Thus, replacing $U\left(G\right)''$ with $\mathsf{Alg}\left\{ L_{\alpha}\right\} $,
results in the sufficient condition
\[
\left[L_{\alpha},H\right]\in\mathsf{Alg}\left\{ L_{\beta}\right\} \,\,\,\,\,\,\,\,\,\,\,\,\,\,\,\,\,\forall L_{\alpha}.
\]

\subsection{Example: Continuous-time Quantum Walk on a Binary Tree}

Continuous-time quantum walk (CTQW) is a generic model of quantum
dynamics that admits visually intuitive demonstration of QCG by symmetrization.
More specifically, we will focus on CTQW on binary trees introduced
by \cite{Farhi98} and demonstrated to evolve in a reduced state-space
in \cite{Childs01}. 

The CTQW model is specified by a simple undirected graph $\mathcal{G}$
with vertices $V$ and edges $E$. The Hilbert space is defined as
$\mathcal{H}:=\textsf{span}\left\{ \ket{v_{i}}\right\} _{v_{i}\in V}$,
and the Hamiltonian is constructed in the same way as the stochastic
transition rates matrix (in this case all rates are normalized to
$1$) 
\begin{equation}
H:=-\sum_{\left(v_{i},v_{j}\right)\in E}\left(\ket{v_{i}}\bra{v_{j}}+\ket{v_{j}}\bra{v_{i}}\right)+\sum_{v_{i}\in V}d_{i}\ket{v_{i}}\bra{v_{i}}.\label{eq:CTQW hamiltonian def}
\end{equation}
The degree $d_{i}$ of a vertex $v_{i}$ is the total number of vertices
connected to it. 

The concrete example we will analyze is shown in fig. \ref{fig:BinaryTreeWalk}(a),
\begin{figure}[t]
\begin{raggedright}
(a) %
\noindent\begin{minipage}[t]{1\columnwidth}%
\includegraphics[width=1\columnwidth]{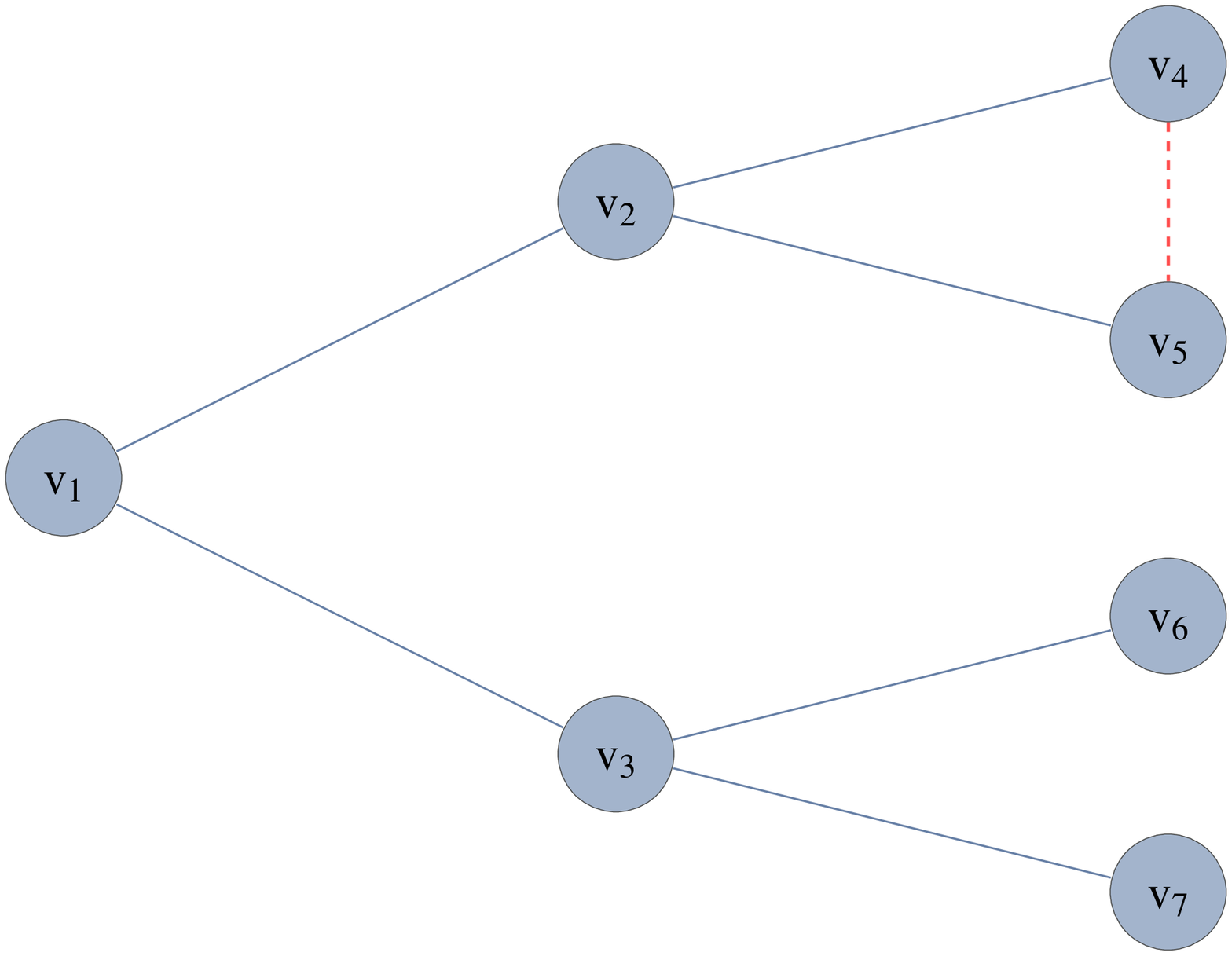}%
\end{minipage}
\par\end{raggedright}
\raggedright{}(b) %
\noindent\begin{minipage}[t]{1\columnwidth}%
\includegraphics[width=1\columnwidth]{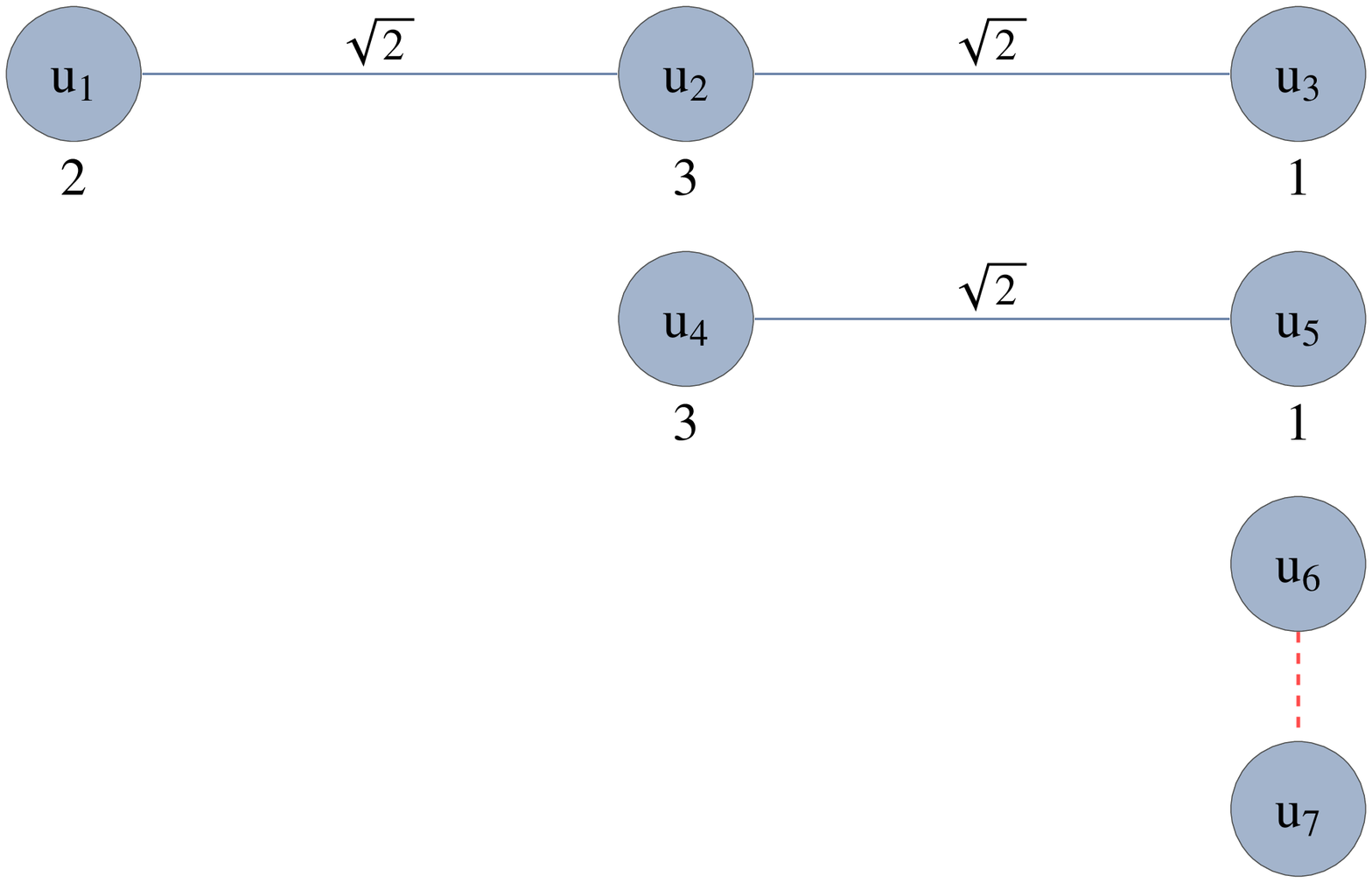}%
\end{minipage}\caption{\label{fig:BinaryTreeWalk}Quantum walk on a tree (a) is reduced to
parallel quantum walks over the columns of the tree (b). Labels on
the edges indicate transition rates and labels under the vertices
indicate local potentials (the default value is 1 for both). Addition
of the red dashed edge in (a) results in the addition of identical
edge in (b). Even though the red edge brakes the symmetry of the tree
it does not affect the reduced dynamics between the columns (see main
text).}
\end{figure}
where we ignore the red dashed line for now and focus on the tree
spanned by the solid edges. If this was a classical random walk, then
CG by partition of vertices to the 3 columns would be compatible with
dynamics. In the quantum case, however, partition to sectors is not
enough to specify a CG, and there is not much else to guide us in
the appropriate choice of compatible bipartition other than symmetries. 

Symmetries of CTQWs arise from the automorphisms of the underlying
graph \cite{Krovi07}. Graph automorphisms form a group $\mathsf{Aut}_{\mathcal{G}}:=\left\{ \varphi\right\} $
that consists of permutations of vertices that leave the set of edges
unchanged 
\[
\left(v_{i},v_{j}\right)\in E\,\,\,\,\,\Leftrightarrow\,\,\,\,\,\left(v_{\varphi\left(i\right)},v_{\varphi\left(j\right)}\right)\in E.
\]
Using the cycle notation for permutations, our graph automorphisms
are generated by $a=\left(45\right)$ and $b=\left(23\right)\left(47\right)\left(56\right)$;
it is also instructive to point out the group element $c=bab=\left(67\right)$.
Permutation $b$ can be thought of as a flip of the whole tree around
the horizontal axes through the root, while permutations $a$ and
$c$ are flips of the sub-trees around horizontal axes through their
own roots. 

In order to streamline the calculations, it is convenient to express
the Hamiltonian as a sum of permutations. Permutations are naturally
represented by orthogonal (unitary) operators 
\[
\Pi_{\varphi}:=\sum_{i}\ketbra{v_{\varphi\left(i\right)}}{v_{i}}.
\]
The Hamiltonian (\ref{eq:CTQW hamiltonian def}) can now be written
as a sum of 2-cycle permutations $\left(ij\right)$ (note that $\Pi_{\left(ij\right)}$
acts as the identity on vertices that are not $v_{i}$ or $v_{j}$)
\[
H=-\sum_{\left(v_{i},v_{j}\right)\in E}\Pi_{\left(ij\right)}+\left|E\right|I.
\]
Since $\left|E\right|I$ only adds a total phase to the evolutions
we can safely drop it. In our concrete case the Hamiltonian is 
\begin{equation}
H=-\Pi_{\left(12\right)}-\Pi_{\left(13\right)}-\Pi_{\left(24\right)}-\Pi_{\left(25\right)}-\Pi_{\left(36\right)}-\Pi_{\left(37\right)}.\label{eq:CTQW hamiltonian in permutations}
\end{equation}
Note that the adjoint action of any permutation $\varphi$ on a 2-cycle,
results in another 2-cycle
\[
\Pi_{\varphi}\Pi_{\left(ij\right)}\Pi_{\varphi}^{T}=\Pi_{\left(\varphi\left(i\right)\varphi\left(j\right)\right)}.
\]
It is now easy to check that the group generated by $a=\left(45\right)$
and $b=\left(23\right)\left(47\right)\left(56\right)$ commutes with
the Hamiltonian, because the adjoint action of $\Pi_{a}$ or $\Pi_{b}$
permutes the 2-cycles in (\ref{eq:CTQW hamiltonian in permutations}),
but leaves the whole sum unchanged
\[
\Pi_{a}H\Pi_{a}^{T}=\Pi_{b}H\Pi_{b}^{T}=H.
\]
Therefore, the finite group $\mathsf{Aut}_{\mathcal{G}}$ represented
by the unitary operators $\left\{ \Pi_{a},\Pi_{b}\right\} $ is a
symmetry of the Hamiltonian. 

Using the shorthand notation 
\[
\ket{+_{ijk...}}=\frac{\ket{v_{i}}+\ket{v_{j}}+\ket{v_{k}}+...}{normalization}
\]
\[
\ket{\pm_{ijk...}}=\frac{\ket{v_{i}}-\ket{v_{j}}+\ket{v_{k}}-...}{normalization}
\]
we first identify the 3 trivial irreps of $\mathsf{Aut}_{\mathcal{G}}$
as the subspaces
\begin{align*}
\mathcal{M}_{1,1} & :=\textsf{span}\left\{ \ket{+_{1}}\right\} \\
\mathcal{M}_{1,2} & :=\textsf{span}\left\{ \ket{+_{23}}\right\} \\
\mathcal{M}_{1,3} & :=\textsf{span}\left\{ \ket{+_{4567}}\right\} .
\end{align*}
There are also 2 non-trivial but equivalent irreps, where $\Pi_{a}$
acts by $1$ and $\Pi_{b}$ acts by $-1$ 
\begin{align*}
\mathcal{M}_{2,1} & :=\textsf{span}\left\{ \ket{\pm_{23}}\right\} \\
\mathcal{M}_{2,2} & :=\textsf{span}\left\{ \ket{\pm_{4657}}\right\} .
\end{align*}
The last irrep is single and 2-dimensional 
\begin{align*}
\mathcal{M}_{3} & :=\textsf{span}\left\{ \ket{\pm_{4567}},\ket{\pm_{5467}}\right\} .
\end{align*}

Accounting for multiplicities, the Hilbert space decomposes to
\[
\mathcal{H}=\left(\mathcal{M}_{1}\otimes\mathcal{N}_{1}\right)\oplus\left(\mathcal{M}_{2}\otimes\mathcal{N}_{2}\right)\oplus\mathcal{M}_{3}
\]
where $\mathcal{N}_{1}$ and $\mathcal{N}_{2}$ are $3$ and $2$
dimensional multiplicity spaces. Now we can change to the new basis
$\ket{u_{i}}$ that are native to these irreps
\[
\begin{array}{lclcc}
\ket{u_{1}}:=\ket{+_{1}} &  & \ket{u_{4}}:=\ket{\pm_{23}} &  & \ket{u_{6}}:=\ket{\pm_{4567}}\\
\ket{u_{2}}:=\ket{+_{23}} &  & \ket{u_{5}}:=\ket{\pm_{4657}} &  & \ket{u_{7}}:=\ket{\pm_{5467}}\\
\ket{u_{3}}:=\ket{+_{4567}}.
\end{array}
\]
In the new basis the Hamiltonian is block diagonal $H=H_{1}\oplus H_{2}\oplus H_{3}$,
where
\begin{equation}
H_{1}=\begin{pmatrix}2 & -\sqrt{2} & 0\\
-\sqrt{2} & 3 & -\sqrt{2}\\
0 & -\sqrt{2} & 1
\end{pmatrix}\,\,\,\,\,\,\,\,\,\,\begin{array}{l}
H_{2}=\begin{pmatrix}3 & -\sqrt{2}\\
-\sqrt{2} & 1
\end{pmatrix}\\
\\
H_{3}=\begin{pmatrix}1 & 0\\
0 & 1
\end{pmatrix}.
\end{array}\label{eq:Tree walk Hamiltonian blocks}
\end{equation}
(We added back the global term $\left|E\right|I$ to present the more
conventional diagonal elements). This is the explicit form (\ref{eq:symmetric H decomposition})
of $H$ that acts non-trivially on multiplicity spaces only. The terms
$H_{1}$, $H_{2}$ act on multiplicity spaces of 1-dimensional irreps,
and the term $H_{3}$ acts trivially because $\mathcal{M}_{3}$ is
multiplicity free. Therefore, the dynamics can be isolated as quantum
walks on disconnected components associated with the irreps, see fig.
\ref{fig:BinaryTreeWalk}(b). One complication that arises, caused
by boundaries of the finite graph, is the non-constant potential on
the vertices, as seen on the diagonal of the Hamiltonian. 

Quantum walk in the multiplicity space of the trivial irrep (top row
in fig \ref{fig:BinaryTreeWalk}(b)) was first shown in \cite{Childs01}
to be the reduced 1D walk over the ``column states'' ($\ket{u_{1}}$,
$\ket{u_{2}}$, $\ket{u_{3}}$ in our notation). Boundary effects,
causing a potential ``bump'', were numerically shown to be not significant
in the larger trees. More importantly, it was understood that the
reduced quantum walk on the 1D chain of ``column states'' is responsible
for the exponential speedup in propagation time from the leafs to
the root, compared to the classical walk. Note however, that the full
speedup occurs only from the initial state $\ket{u_{3}}$. If the
initial state is also supported on $\ket{u_{5}}$ then the speedup
will only carry it to the second column, and if it had support on
$\ket{u_{6}}$ or $\ket{u_{7}}$ then those parts are stuck in the
initial column. 

The new insight is that the reduced dynamics of quantum walks on trees
persist even if the symmetry of the tree is broken in a manner that
is described by Eq. (\ref{eq: group action compatibility condition}).
Adding the red dashed edge to the tree in fig \ref{fig:BinaryTreeWalk}(a)
breaks the original automorphism symmetry since $\left(v_{4},v_{5}\right)\in E$
but $\left(v_{b\left(4\right)},v_{b(5)}\right)=\left(v_{7},v_{6}\right)\notin E$.
The new Hamiltonian $H'$, when expressed as a sum of 2-cycles, receives
an additional term $H'=H-\Pi_{\left(45\right)}$, which breaks the
symmetry under the action of $b$ 
\begin{align*}
\Pi_{b}H'\Pi_{b}^{T} & =H-\Pi_{\left(76\right)}\neq H'.
\end{align*}
The action of $a=\left(45\right)$, however, still commutes with $H'$.
The commutator of $H'$ with $\Pi_{b}$ can be expressed as 
\[
\left[\Pi_{b},H'\right]=\left[\Pi_{b},H\right]-\left[\Pi_{b},\Pi_{a}\right]=\left[\Pi_{a},\Pi_{b}\right],
\]
so it belongs to the operator algebra spanned by the generators $\{\Pi_{a},\Pi_{b}\}$.
Theorem \ref{thm:comp. of symmetrization with dynamocs} then implies
that QCG by such symmetrization is still compatible with dynamics.
The only difference is that in addition to acting on the multiplicity
spaces, the Hamiltonian may act independently on the irreps. In this
case, only $\mathcal{M}_{3}$ can be affected (dynamics in 1 dimensional
irreps are absorbed into the multiplicity spaces). Even though $\ket{u_{i}}$
are now native to irreps of a group that is not a symmetry of the
Hamiltonian, we can still use them to block diagonalize the Hamiltonian.
The new decomposition is $H'=H_{1}\oplus H_{2}\oplus H_{3}'$ where
$H_{1}$, $H_{2}$ are the same as before (\ref{eq:Tree walk Hamiltonian blocks}),
and
\[
H_{3}'=\begin{pmatrix}2 & -1\\
-1 & 2
\end{pmatrix}.
\]
$H_{3}'$ generates non-trivial evolutions in $\mathcal{M}_{3}$ which
can be seen graphically as addition of the red dashed edge in fig.
\ref{fig:BinaryTreeWalk}(b). It generates evolutions vertically,
in a stationary subspace of the right column, but it does not interfere
with dynamics across the columns.

This example demonstrates the fact that strict symmetries of the Hamiltonian
are not necessary for the effective reduction of dynamics. That being
said, it is not easy to see a priori which groups are compatible.
This wouldn't work, for example, if we broke the symmetry with $\Pi_{(56)}$
instead of $\Pi_{\left(45\right)}$, since $\left[\Pi_{\left(56\right)},\Pi_{a}\right]$
is not an element in the operator algebra spanned by $\{\Pi_{a},\Pi_{b}\}$.
Just because $\Pi_{(56)}$ generates dynamics within the column does
not mean that it cannot interfere with the dynamics across the columns.
In our case, the choice of $\Pi_{\left(45\right)}$ to brake the symmetry
works, because it is the element $a$ of the symmetry group of the
Hamiltonian. Since $\mathsf{Aut}_{\mathcal{G}}$ is not abelian, the
modified Hamiltonian was no longer commuting with it, but because
the symmetry breaking element came from the group, the commutant was
guaranteed to be in the operator algebra spanned by the group.

\section{Summary and Outlook}

We have established the common notion of coarse-graining in both classical
and quantum settings and provided it with operational meaning. By
introducing bipartition tables we were able to capture the key structure
of a coarse-graining scheme in a concise, visual form. Our main focus
\textendash{} the reduction of dynamics by coarse-graining the state-space
\textendash{} lead to the formulation of compatibility conditions
between a coarse-graining scheme and dynamics. Such compatibility
conditions were shown to be necessary and sufficient for the existence
of a reduced generator of dynamics that governs time evolutions in
the coarse-grained state-space. Considering symmetrizations of states
with a group representation as a special case of coarse-graining,
and specializing the compatibility condition to this case, we showed
how group representations can be used to reduce the dynamics. This
result turned out to be closely related to Noether's Theorem that
uses symmetries of dynamics to identify the static degrees of freedom
i.e., constants of motion. We generalized this perspective with less
restricted group representations to identify dynamically independent
degrees of freedom. Such degrees of freedom are not necessarily constants
of motion and the group representations are not necessarily symmetries
of dynamics.

The task of reducing dynamics that was studied here demands an exact
reproduction of dynamical evolutions in the reduced state-space. The
only way to satisfy such demand is to single out the degrees of freedom
that evolve independently from the rest. As we pointed out, finding
group representations that satisfy the commutation relation (\ref{eq: group action compatibility condition})
is one possible approach to the problem of exact reduction. This formulation,
however, might be too strict for some practical application and an
approximate reduction may be in order. The compatibility conditions
for the exact reduction can then be taken as a starting point for
the development of approximations when the conditions are not exactly
satisfied. 

Aside from the reduction of dynamics, the notion of coarse-graining
raises some interesting questions on its own. QCG was shown to be
the map that accounts for some ignorance of the observer. Specifically,
it accounts for the restriction to measurements that belong to the
span of bipartition operators that define the QCG scheme. Such restricted
observers arise naturally in physical situations characterized by
inability to measure external environment, or inaccessibility of a
particular reference frame in which the system is prepared. The QCG
formalism allows to account for all these and more general restrictions,
but the physical situations that lead to the more general cases are
not so clear. In particular, we can now account for restrictions to
observables that do not form an algebra \textendash{} the general
span of bipartition operators is an \textit{operator system.} Then
the question is, what physical situations lead to the restriction
to observables that form an operator system, but not an algebra?

Regardless of physical interpretations, QCG is a powerful analytical
concept that offers new and flexible ways to ``select'' quantum
information encoded in the physical state. Within the framework of
QCG we can capture information selected by group representations,
virtual subsystems, and restricted observables under the same umbrella
concept. We believe that this generic nature makes QCG a fundamental
concept of quantum information with potential applications in quantum
error correction, tomography and quantum thermodynamics. 
\begin{acknowledgments}
I would like to thank R. Raussendorf for guidance and review throughout
preparation of this work. This work was supported by NSERC.
\end{acknowledgments}

\end{document}